 \definecolor{BLACK}{gray}{0}
 \definecolor{WHITE}{gray}{1}
 \definecolor{RED}{rgb}{1,0,0}
 \definecolor{GREEN}{rgb}{0,1,0}
 \definecolor{BLUE}{rgb}{0,0,1}
 \definecolor{CYAN}{cmyk}{1,0,0,0}
 \definecolor{MAGENTA}{cmyk}{0,1,0,0}
 \definecolor{YELLOW}{cmyk}{0,0,1,0}
  \theoremstyle{remark}
    \newtheorem{rem}{\protect\remarkname}
    \newtheorem{rem}{\protect\remarkname}[chapter]
  \theoremstyle{remark}
  \newtheorem*{rem*}{\protect\remarkname}
  \theoremstyle{plain}
    \newtheorem{thm}{\protect\theoremname}
    \newtheorem{thm}{\protect\theoremname}[chapter]
  \theoremstyle{plain}
    \newtheorem{lem}{\protect\lemmaname}
    \newtheorem{lem}{\protect\lemmaname}[chapter]
  \theoremstyle{plain}
    \newtheorem{conjecture}{\protect\conjecturename}
    \newtheorem{conjecture}{\protect\conjecturename}[chapter]
\newtheorem{corollary}{Corollary}
\renewcommand{\vec}[1]{\mathbf{#1}}
\newcommand{\G}{G(n,\alpha,d)}
  \providecommand{\conjecturename}{Conjecture}
  \providecommand{\lemmaname}{Lemma}
  \providecommand{\remarkname}{Remark}
  \providecommand{\theoremname}{Theorem}
\begin{document}

\title{Directed Random Geometric Graphs}

\author{Jesse Michel}
\affiliation{Massachusetts Institute of Technology, Cambridge MA, 02139}
\author{Sushruth Reddy}
\affiliation{Massachusetts Institute of Technology, Cambridge MA, 02139}
\author{Rikhav Shah}
\affiliation{Massachusetts Institute of Technology, Cambridge MA, 02139}
\author{Sandeep Silwal}
\affiliation{Massachusetts Institute of Technology, Cambridge MA, 02139}
\author{Ramis Movassagh}
\email{R. Movassagh: ramis@us.ibm.com, Rest: mithack@mit.edu}
\selectlanguage{english}%
\affiliation{IBM Research, MIT-IBM AI Lab, Cambridge MA, 02142}
\date{\today}
\maketitle
Many real-world networks are intrinsically directed. Such networks include activation of genes, hyperlinks on the internet, and the network of followers on Twitter among many others \cite{gene_analysis, wikipedia_anlaysis, twitter_analysis}. The challenge, however, is to create a network model that has many of the properties of real-world networks such as powerlaw degree distributions and the small-world property \cite{highclustering, real_world_prop3}. To meet these challenges, we introduce the \textit{Directed} Random Geometric Graph (DRGG) model, which is an extension of the random geometric graph model. We prove that it is scale-free with respect to the indegree distribution, has binomial outdegree distribution, has a high clustering coefficient, has few edges and is likely small-world. These are some of the main features of aforementioned real world networks. We empirically observe that word association networks have many of the theoretical properties of the DRGG model.
\tableofcontents{}
\pagebreak
\section{\label{sec:Context-and-summary}Introduction}

The widespread availability of rich data on complex networks has spurred mathematical research into the properties of such networks. Researchers frequently use random graphs to model real networks \cite{power_in_out, power_law_in_random, distributed_algo, distributed_algo2, real_world_prop3, gene_analysis, twitter_analysis}.  There are many properties of real networks that appear in a diverse set of fields. The most prevalent of these properties are the following:

\textbf{Scale-free:} the distribution of the degrees of nodes is given by a power law, i.e. $P[\text{degree of }v=k]\sim k^{-\gamma}$ \cite{power_law_in_random}.  In general, power-laws frequently occur in systems where `rich' items become richer, though there are many potential explanations \cite{wikipedia_anlaysis,power_law_in_random}.  For networks, this means that nodes with high degree are more likely to attract connections from additional nodes added to the network.  This frequently occurs when the degree is a rough measure of popularity, for example, websites that are linked to by a lot of other websites are seen as credible, so are linked to by more websites \cite{www}.  In directed graphs, the indegree and outdegree distributions can be examined separately.  Networks can exhibit a power law indegree distribution but a skinny tail (e.g. Gaussian) outdegree distribution \cite{power_in_lowtail_out}.

\textbf{Few-edges:} Networks are usually quite sparse \cite{sparse}. Every node has a small expected degree that doesn't increase much as the number of nodes scales.  For example, if Twitter doubles its number of users, it's unlikely that the average user would increase the number of users they follow by any constant factor. Real networks tend to have $\widetilde{\mathcal{O}}(n)$ edges \footnote{$\widetilde{\mathcal{O}}(n)=\mathcal{O}(n\,\text{poly}(\log n))$}.

\textbf{Small-world:} Most nodes are connected via a relatively short number of hops in the graph \cite{newman,small_world}.  Specifically, the expected length of the shortest path between two randomly selected points is $O(\log n)$.  This property is related to the diameter, which is the longest shortest path in the graph.  Clearly, the diameter is an upper bound on the expected length of a shortest path.

\textbf{Clustering:} The \textit{clustering coefficient} is the average likelihood that given a node, a randomly selected pair of its neighbors will be connected via a single edge.  Intuitively, agents tend to make connections with those they are already `close' to.  We can write an expression for the clustering coefficient,
$$\frac{1}{n}\sum_uP[v\leadsto w|u\leadsto v,u\leadsto w],$$
where $v \leadsto w$ is the event that there is a directed edge from $v$ to $w$.
This paper finds a class of random graphs which satisfies all four of these key properties.

\section{\label{related_work}Previous work}
We briefly describe classes of random graphs that have attracted much mathematical attention. Perhaps the most prominent are Erd{\H o}s-Renyi random graphs, denoted $G(n, p)$. It is natural to extend this model to the directed case by including the directed edges $(v,u)$ and $(u,v)$ independently with probability $p$.   Erd{\H o}s-Renyi graphs are tractable and small-world but are not scale-free. 

Another well-known network with heavy-tailed power law node degree distributions is the preferential attachment model. While this model exhibits the desired power law distribution for degrees and is mathematically tractable, this model lacks certain key properties of some real world networks such as a realistic clustering coefficient, which measures the commonality of small communities \cite{pref_bad, clustering_prefattachment}.

Random geometric graphs (RGG) are another class of random graphs that have become more popular in recent years due to their simple formulation \cite{rggs}. In the simple RGG model, one uniformly distributes $n$ points over some space (often taken to be the unit $d-$dimensional cube or torus), and takes them to be the vertices of a random graph. One connects any pair of vertices with distance less than a fixed $R$. Intuitively nodes which are closer are more connected, which is a desired feature of some real world networks. Much is known about this class of graphs, such as the sizes of connected components and the minimum value of $R$ which results in a connected graph \cite{mthresh}. These types of random graphs are commonly used to model radio broadcasting towers and there has been much work on distributed algorithms on these models (see \cite{distributed_algo, distributed_algo2}). Some variations on the basic RGG model have also been studied. One variant, termed the $k$-nearest neighbor model, samples points as in the basic RGG model but only connects each point to its $k$ nearest neighbors \cite{RGGVariant1}. Another variant generalizes the RGG model by sampling points as before, but connecting nodes with a probability dependent on their distance \cite{RGGVariant2}.

\section{\label{model}Directed Random Geometric Graph (DRGG) Model}
 Consider $n$ points (nodes/vertices) uniformly distributed on a $d$-dimensional cube $[0,L]^d$. To make the model fully translation-invariant, we impose periodic boundary conditions, that is, for any $i$, $$(x_1,\cdots,x_{i-1},x_i+L,x_{i+1},\cdots,x_d)=(x_1,\cdots,x_{i-1},x_i,x_{i+1},\cdots,x_d).$$ 
This is equivalent considering points on a $d$-dimensional torus. For simplicity, below we take the unit cube (i.e., set $L=1$), but our results can be extended to a general $L$. Each point $v$ is assigned a radius $r_v$ distributed according to a Pareto distribution \cite{pareto},
\newcommand{\raddist}{f}
$$\raddist(r) = \begin{cases}
{\eta}/r^{\alpha} & r_0\leq r \leq 1/2\\
0 & \text{otherwise}
\end{cases},$$
where $\alpha>d+1$ is a parameter of the graph and $r_0$ is the minimum allowed radius, chosen based on $n$ so that the resulting graph is almost surely connected \cite{mthresh},
$$r_0=\left(\frac{\log n}{V_d\text{ }n}\right)^{1/d}.$$ 
where $V_d$ be the volume of the unit ball in $d$-dimensions (i.e. $v_1=2,v_2=\pi,v_3=4\pi/3$). $r \le 1/2$ ensures that balls of selected radii do not intersect themselves. $\eta$ is the normalizing factor that ensures $\int_{r_0}^{1/2}\raddist(r)dr=1$:
\[
    \eta=\frac{(\alpha-1)r_0^{\alpha-1}}{1-(2r_0)^{\alpha-1}}.
\]
We refer to the ball of radius $r_v$ centered at $v$ as ``$v$'s circle'' and denote the distance on the torus between $v$ and $u$ as $d(u,v)=d(v,u)$. The directed edges are added as follows. Add a directed edge from $u$ to $v$ (denoted by $u\leadsto v$) if $u$ is in $v$'s circle (i.e. $d(v,u)\le r_v$). Intuitively, $r_v$ is a measure of the popularity of $v$. If $v$ is unpopular, then only nodes close by will connect; conversely very popular nodes attract the attention of nodes that are far away. Lastly, since we want to model real-world networks, we can think of $d$ as being a small integer, such as $1 \le d \le 5$. See Figures \eqref{fig:drgg_fig} and \eqref{fig:drgg_fig2} for reference.
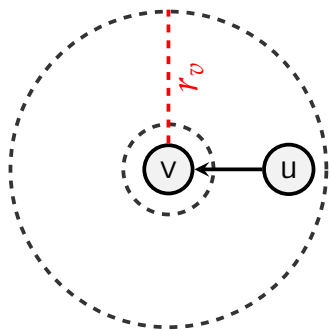
\begin{figure}[!htbp]
\begin{tikzpicture}[font=\sffamily]

        \tikzset{node style/.style={state, 
                                    minimum size=0.5cm,
                                    line width=0.5mm,
                                    fill=gray!10!white}}

        \node[node style] at (0, 0)         (v)     {\Large v};
        \node[node style] at (1.6, 0)     (u)     {\Large u};
        \node[circle, dashed, draw=black!80, thick, inner sep=0pt, minimum size=4.2cm, line width=0.5mm] (circ) {} (0,0) ;
        \node[circle, dashed, draw=black!80, thick, inner sep=0pt, minimum size=1.2cm, line width=0.5mm] (circ2) {} (1.6, 0) ;
        \draw[red,thick,dashed, line width=0.5mm] (v) -- (circ) node[midway, below, sloped] (EdgeAB) {\Large $r_v$};
        \draw[thick, line width=0.5mm, -stealth,decorate,decoration={snake,amplitude=0pt,pre length=2pt,post length=3pt}] (u) -- (v);
    \end{tikzpicture}
    \caption{\label{fig:drgg_fig}Illustration of $u\leadsto v$:  There is a directed edge from vertex $u$ to vertex $v$ since $u$ lies inside $v$'s circle of radius $r_v$. There is no directed edge from $v$ to $u$ since $v$ lies outside of $u$'s circle.}
\end{figure}
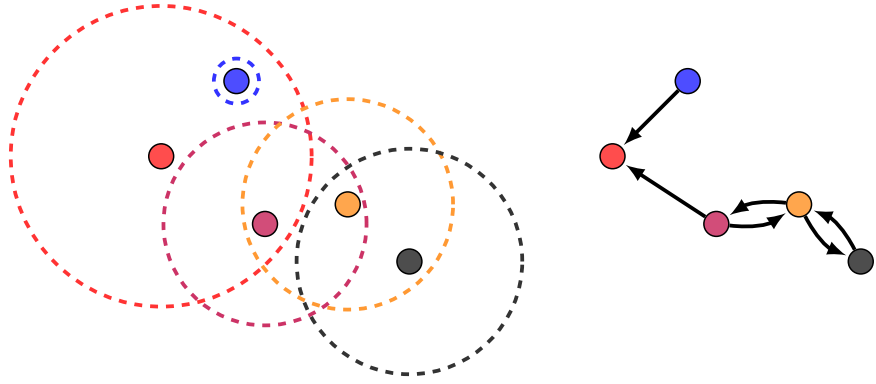
\begin{figure}[!htbp]
\begin{tikzpicture}[font=\sffamily]

        \tikzset{node style/.style={state, 
                                    minimum size=0.2cm,
                                    line width=0.2mm}}

        \node[node style, fill=red!70!] at (-10, 1)  (a) {};
        \node[node style, fill=blue!70!] at (-9, 2)  (c) {};
        \node[node style, fill=purple!70!] at (-8.62, 0.1)  (g) {};
        \node[node style, fill=orange!70!] at (-7.52, 0.36)  (i) {};
        \node[node style, fill=black!70!] at (-6.7, -.4)  (l) {};
        \node[circle, dashed, draw=red!80, thick, inner sep=0pt, minimum size=4cm, line width=0.5mm] (circ1) at  (-10, 1) {};
        \node[circle, dashed, draw=blue!80, thick, inner sep=0pt, minimum size=0.6cm, line width=0.5mm] (circ2) at (-9, 2) {};
        \node[circle, dashed, draw=purple!80, thick, inner sep=0pt, minimum size=2.7cm, line width=0.5mm] (circ4) at (-8.62, 0.1) {};
        \node[circle, dashed, draw=orange!80, thick, inner sep=0pt, minimum size=2.8cm, line width=0.5mm] (circ5) at (-7.52, 0.36) {};
        \node[circle, dashed, draw=black!80, thick, inner sep=0pt, minimum size=3cm, line width=0.5mm] (circ6) at (-6.7, -.4) {};
        
        \node[node style, fill=red!70!] at (-4, 1)  (a) {};
        \node[node style, fill=blue!70!] at (-3, 2)  (c) {};
        \node[node style, fill=purple!70!] at (-2.62, 0.1)  (g) {};
        \node[node style, fill=orange!70!] at (-1.52, 0.36)  (i) {};
        \node[node style, fill=black!70!] at (-0.7, -.4)  (l) {};

        \draw[every loop,
              auto=right,
              line width=0.5mm,
              >=latex]
            (c)     edge[right=20]            node {} (a)
            (g)     edge[right=20]            node {} (a)
            (g)     edge[bend right=20]            node {} (i)
            (i)     edge[bend right=20]            node {} (g) 
            (i)     edge[bend right=20]            node {} (l)
            (l)     edge[bend right=20]            node {} (i);  
    \end{tikzpicture}
    \caption{\label{fig:drgg_fig2} Left: Vertices with their respective circles plotted in the same color. Right: The corresponding DRGG.}
\end{figure}

We refer to this as the Directed Random Geometric Graph (DRGG) model, denoted by $G(n,\alpha,d)$. In summary the notation is:
\begin{itemize}
    \item $\G$ is a DRGG where $n$ is the number of nodes and $d$ is the dimension of the space.
    \item $\raddist(r)=\eta/r^{\alpha} $ probability density function for the selection of radii.
    \item $[r_0,1/2]$ is the support of $f(r)$.
\end{itemize}

\section{\label{math_results} Mathematical results}
For any class of random graphs, there are a variety of properties of theoretical and practical interests as outlined in the introduction. In this section, we prove that the indegree distribution of DRGGs is given by a power law with parameter $\frac{\alpha-1}d +1$ while the outdegree distribution is a binomial with parameter $\Theta(\frac{\log n}{n})$ where $n$ is the number of vertices. We also show that the total number of edges in DRGGs is $\Theta(n \log n)$ and show that the clustering coefficient of DRGGs approaches a constant as $n \rightarrow \infty$. Finally, we show how the standard geometric random graphs can be thought of as a graph limit of the DRGG model. We end with a conjecture about the diameter of DRGGs.
\subsection{Degree distributions}
As explained in Section \ref{sec:Context-and-summary}, motivated by real-world networks, we want a graph model whose indegree distribution follows a power law, and whose outdegree distribution decays exponentially.  Here we prove that indeed in DRGG, the indegrees follow a power law distribution, and outdegrees a binomial distribution.

\begin{thm}\label{thm1}
Let $\delta_{in}(k)$ and $\delta_{out}(k)$ be the density function for the indegree and outdegree of the random graph $\G$ respectively. Then, asymptotically as $n \to \infty$, 
\begin{equation}
\delta_{in}(k)\propto k^{-\frac{\alpha-1}{d}-1}
\end{equation}
for fixed $d$, $\alpha > d+1$, and $\frac{\alpha-1}d \ll k \ll n$ and 
\begin{equation}
\delta_{out}(k)= \dbinom{n-1}k z^k (1-z)^{n-1-k}
\end{equation}
where $z = \Theta(\log n/n).$
\end{thm}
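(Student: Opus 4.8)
The plan is to reduce both degree distributions to conditional binomials and, for the indegree, to extract the power law from a Beta integral.

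\emph{Outdegree.} Fix a vertex $u$. For each of the other $n-1$ vertices $v$, the pair (position $x_v$, radius $r_v$) is independent across $v$, with $x_v$ uniform on the torus and $r_v\sim f$. Since a ball of radius $r_v\le 1/2$ on the unit torus is isometric to a Euclidean ball, translation invariance gives $\Pr[u\leadsto v\mid r_v]=\Pr[d(u,v)\le r_v\mid r_v]=V_d r_v^{\,d}$, and these events are mutually independent over $v$. Hence the outdegree is \emph{exactly} $\mathrm{Binomial}(n-1,z)$ with $z=\mathbb{E}[V_d r_v^{\,d}]=V_d\eta\int_{r_0}^{1/2}r^{\,d-\alpha}\,dr$, which is the stated $\delta_{out}$. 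Because $\alpha>d+1$, the exponent $d-\alpha<-1$, so the integral is dominated by its lower limit $r_0$; substituting $\eta\sim(\alpha-1)r_0^{\,\alpha-1}$ and $r_0^{\,d}=\log n/(V_d n)$ yields $z\sim\frac{\alpha-1}{\alpha-d-1}\cdot\frac{\log n}{n}=\Theta(\log n/n)$.

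\emph{Indegree.} The indegree of $v$ counts the other vertices lying in $v$'s ball, and conditioned on $r_v=r$ these $n-1$ events are i.i.d.\ $\mathrm{Bernoulli}(V_d r^{\,d})$, so the indegree is $\mathrm{Binomial}(n-1,V_d r^{\,d})$. Averaging over $r$,
$$\delta_{in}(k)=\binom{n-1}{k}\int_{r_0}^{1/2}\big(V_d r^{\,d}\big)^{k}\big(1-V_d r^{\,d}\big)^{n-1-k}\,\eta r^{-\alpha}\,dr .$$
Now substitute $p=V_d r^{\,d}$: writing $\beta:=(\alpha-1)/d$ one finds $\eta r^{-\alpha}\,dr=\frac{\eta V_d^{\,\beta}}{d}\,p^{-\beta-1}\,dp$ with new limits $p_0=V_d r_0^{\,d}=\log n/n$ and $p_1=V_d/2^{\,d}$, so
$$\delta_{in}(k)=\frac{\eta V_d^{\,\beta}}{d}\binom{n-1}{k}\int_{p_0}^{p_1}p^{\,k-\beta-1}(1-p)^{\,n-1-k}\,dp .$$
Extending the integral to $[0,1]$ gives the Beta function $B(k-\beta,n-k)=\Gamma(k-\beta)\Gamma(n-k)/\Gamma(n-\beta)$, and with $\binom{n-1}{k}=\Gamma(n)/(\Gamma(k+1)\Gamma(n-k))$ the whole expression collapses to $\frac{\eta V_d^{\,\beta}}{d}\cdot\frac{\Gamma(n)\,\Gamma(k-\beta)}{\Gamma(n-\beta)\,\Gamma(k+1)}$. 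Applying the ratio asymptotics $\Gamma(x+a)/\Gamma(x)\sim x^{a}$ in the regime $\beta\ll k\ll n$ gives $\Gamma(n)/\Gamma(n-\beta)\sim n^{\beta}$ and $\Gamma(k-\beta)/\Gamma(k+1)\sim k^{-\beta-1}$, hence $\delta_{in}(k)\sim(\text{const independent of }k)\cdot n^{\beta}k^{-\beta-1}\propto k^{-(\alpha-1)/d-1}$.

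\emph{Main obstacle.} The only step that is not bookkeeping is justifying that replacing $\int_{p_0}^{p_1}$ by $\int_0^1$ does not change the asymptotics, i.e.\ that the truncated tails are negligible next to $B(k-\beta,n-k)$. The integrand is an unnormalized $\mathrm{Beta}(k-\beta,n-k)$ density, concentrated near $p^\star\approx k/n$: a Chernoff/Stirling bound shows the mass above $p_1$ is exponentially small once $k\ll n$ (for $d=1$ one has $p_1=1$ and there is nothing to discard), while $\int_0^{p_0}p^{\,k-\beta-1}\,dp\le p_0^{\,k-\beta}/(k-\beta)$ compared via Stirling to $B(k-\beta,n-k)$ gives a ratio of order $\big(e\log n/(k-\beta)\big)^{k-\beta}/\sqrt{k-\beta}$, which is $o(1)$ in the relevant range. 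I expect this truncation analysis --- together with pinning down precisely how fast $k$ must grow relative to $\log n$ --- to be the bulk of the technical effort; the geometric and independence inputs are immediate, and everything else reduces to the Beta/Gamma identity above.
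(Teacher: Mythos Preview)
Your outdegree argument is essentially the paper's. For the indegree, the paper proceeds differently: after the same substitution $u=V_d r^{d}$ it writes the integrand as $e^{nf(u)}$ and applies Laplace's (steepest descent) method at the saddle point $u_{sp}=(k-\beta-1)/(n-\beta-1)$, obtaining an explicit asymptotic expression which it then squeezes via Stirling between two quantities proportional to $k^{-\beta-1}$. Your route --- extend the truncated integral to a full Beta integral, collapse $\binom{n-1}{k}\,B(k-\beta,n-k)$ to $\Gamma(n)\Gamma(k-\beta)/(\Gamma(n-\beta)\Gamma(k+1))$, and read off the power law from $\Gamma(x+a)/\Gamma(x)\sim x^{a}$ --- is correct and arguably more transparent: you obtain an exact closed form before any asymptotics, and the exponent $-\beta-1$ appears in one step. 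The cost is the truncation analysis you flag; the paper avoids this because steepest descent only needs the saddle point to sit in $[p_0,p_1]$, but note that $u_{sp}>p_0=\log n/n$ forces roughly $k\gtrsim\log n$, the same threshold your lower-tail ratio $\bigl(e\log n/(k-\beta)\bigr)^{k-\beta}$ produces, even though the paper's hypothesis reads only $k\gg\beta$. So both arguments are sound; yours is more elementary and makes the endpoint error explicit, while the paper's leans on standard Laplace machinery and leaves the endpoint condition implicit.
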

\begin{proof}
We first prove the indegree case. Consider the probability that a vertex $v$ has indegree $k$. For this to happen, there must be exactly $k$ other vertices in $v$'s circle (of radius $r_v$). The probability of this is
\begin{equation}
P[\delta_{in}(v)=k] = \int_{r_0}^{1/2} P[r_v=r]\text{  } P[\delta_{in}(v)=k \mid r_v=r] \ dr.
\end{equation}
 Since the points are distributed uniformly on the plane, $P[\delta_{in}(v)=k \mid r_v=r]$ is the binomial random variable that describes the probability of exactly $k$ points lying within the volume of the $d$-dimensional sphere. This gives
\begin{equation}
P[\delta_{in}(v)=k] = \eta\dbinom{n}k\int_{r_0}^{1/2} \frac{1}{r^{\alpha}}(V_dr^d)^k(1-V_dr^d)^{n-k} \ dr.
\end{equation}
Making the substitution $u=V_dr^d$ gives us 
\begin{equation}\label{eq1}
         P[\delta_{in}(v)=k] = \frac{\eta}d\dbinom{n}{k}V_d^{\frac{\alpha-1}d}\int_{V_dr_0^d}^{V_d/2^d}u^{k-1- \frac{\alpha-1}d}(1-u)^{n-k} \ du.
\end{equation}
We let $\beta = \frac{\alpha-1}{d}$ for convenience and express the integrand in terms of the exponential function obtaining
\begin{equation*}
    P[\delta_{in}(v)=k] = \frac{\eta}d\dbinom{n}{k}V_d^{\beta}\int_{V_dm^d}^{V_d/2^d} \exp\left[n\left(\frac{k-\beta-1}{n}\ln u+(1-\frac{k}{n})\ln(1-u)\right)\right] \ du.
\end{equation*}
Now let $f(u) = \frac{k-\beta-1}{n}\ln u+(1-\frac{k}{n})\ln(1-u)$ so that the integrand is of the form $ e^{nf(u)}$. Treating $n$ as the large part and setting $f'(u)=0$, the saddle point is $u_{sp} = \frac{k-\beta-1}{n-\beta-1}$ and
\begin{equation*}
    f''(u_{sp}) = - \frac{(n-\beta-1)^3}{n(k-\beta-1)(n-k)}.
\end{equation*}
Note that for large $n$, $f''(u_{sp})$ approaches a negative constant. The steepest descent method gives (see \cite{laplace_method})
\begin{equation}
    \label{eq:steepest_result}
    P[\delta_{in}(v)=k] \approx \frac{\eta}d\dbinom{n}{k}V_d^{\beta}e^{nf(u_{sp})} \sqrt{\frac{2\pi}{n\mid f''(u_{sp})\mid}},
\end{equation}
which after inserting $f(u_{sp})$ and $f''(u_{sp})$ reads
\begin{equation}
\label{eq:inapprox}
    P[\delta_{in}(v)=k] \approx \frac{\eta}d\dbinom{n}{k}V_d^{\beta}\frac{\sqrt{2\pi}(k-\beta-1)^{k-\beta-1/2}(n-k)^{n-k+1/2}}{(n-\beta-1)^{n-\beta + 1/2}}.
\end{equation}
In fact, as $n \rightarrow \infty$, the two sides of Eq. \eqref{eq:steepest_result} become asymptotically equivalent \cite{laplace_proof}. As show in figure \ref{fig:degapprox}, the steepest descent result gives an excellent approximation. One can ignore the term $\frac{\eta}d$ since it do not depend on $k$ so they only contribute to the normalizing constant for $\delta_{in}.$ We now show that for large $n$, Eq. \eqref{eq:inapprox} gives rise to a power law in $k$. Using Stirling's approximation one obtains
$$\frac{1}{\sqrt{2\pi}\left(n-\beta-1\right)}\left(\frac{n}{k-\beta-1}\right)^{\beta+1} \leq \dbinom{n}{k}\frac{(k-\beta-1)^{k-\beta-1/2}(n-k)^{n-k+1/2}}{(n-\beta-1)^{n-\beta+1/2}} \leq \frac{1}{\sqrt{2\pi}\left(n-\beta-1\right)}\left(\frac{n}{k}\right)^{\beta+1}.$$
Ignoring factors that do not depend on $k$, we see that $P[\delta_{in}(v)=k]\propto k^{-\beta-1}$ as desired. 

We now analyze the outdegree case. Consider a vertex $v$. $v$ has outdegree $k$ if it lies in exactly $k$ circles of other vertices. Now consider any other vertex $u \ne v$. The probability that $v$ lies inside $u's$ circle is given by
$$z = \int_{r_0}^{1/2} P[v \text{ in } u's \text{ ball}|r_u=r]\,P[r_u=r]\,dr,
$$
where $ P[v \text{ in } u's \text{ ball}|r_u=r] = V_dr^d$ and $P[r_u=r]=\eta/r^{\alpha}$. Note that $z$ does not depend on $v$ or $u$. Since the radii of all circles are picked independently, it follows that the probability that the outdegree of vertex $v$ is $k$ is $\binom{n-1}kz^k(1-z)^{n-k}$. One can calculate $z$ exactly to be
$$z = \int_{r_0}^{1/2} \frac{\eta}{r^{\alpha}} \, V_d \, r^d \ dr  = \frac{\eta V_d}{d-\alpha+1} \left( \frac{1}{2^{d-\alpha+1}}-r_0^{d-\alpha+1}\right).$$
Defining $\alpha = \beta d + 1$ for $\beta > 1$ the above expression reads $$z = \frac{\beta V_d}{\beta - 1} \left(\frac{r_0^{\beta d}\, 2^{-d}-2^{-\beta d}r_0^d}{r_0^{\beta d}-2^{-\beta d}} \right).$$
Using Taylor expansion, and recalling that $\beta=\frac{\alpha-1}d$  and $1 \gg z$,  the equation above reduces to $$z = \left(\frac{\alpha-1}{\alpha-1-d}\right) \frac{\log n}n + \mathcal{O} \left(\frac{\log n}n\right)^{\beta-1}$$ and we are done.
\end{proof} 

\begin{figure}[hp]
\begin{minipage}{\linewidth}
  \centering
  \includegraphics[width=.4\linewidth]{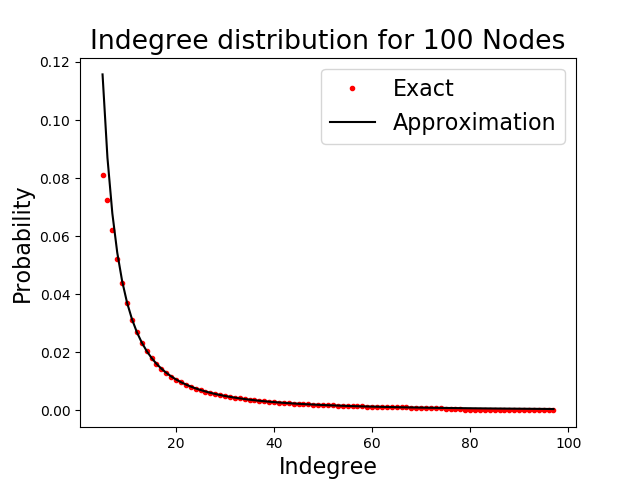}
  \includegraphics[width=.4\linewidth]{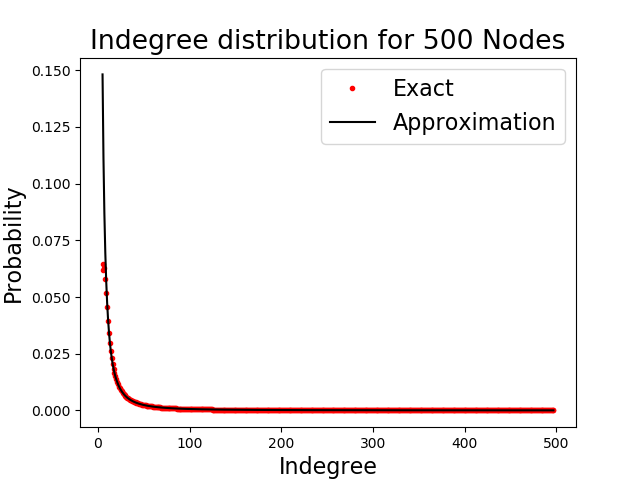}
\end{minipage}
\caption{\label{fig:degapprox}Illustration of the approximation of Eq. \eqref{eq:inapprox} for $100$, and  $500$  vertices for $\G$. The approximation is very accurate for both cases. For 100 nodes we see that the fit is tight except at the head and tail, which aligns with the fact that steepest descent makes a Gaussian approximation. We see a tighter fit for $n=500$, which aligns with the asymptotic nature of our approximation in \ref{thm1}. A heavy tail distribution can also be seen which numerically confirms the results proven in \ref{thm1}. }
\end{figure}

\begin{corollary}
The expected number of edges in $\G$ is $\Theta(n \log n).$
\end{corollary}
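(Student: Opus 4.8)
The plan is to obtain this as an immediate consequence of the computation of the edge probability $z$ already carried out in the proof of Theorem~\ref{thm1}, via linearity of expectation. Every directed edge $u \leadsto v$ is counted exactly once in the indegree of its head $v$, so the expected number of edges is $\sum_{v} \mathbb{E}[\delta_{in}(v)]$ (equivalently $\sum_v \mathbb{E}[\delta_{out}(v)]$). It therefore suffices to pin down the expected degree of a single vertex and multiply by $n$.

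First I would compute $\mathbb{E}[\delta_{in}(v)]$ by conditioning on $r_v = r$: the other $n-1$ vertices are i.i.d.\ uniform on the torus, so the number of them landing in $v$'s ball is $\mathrm{Binomial}(n-1, V_d r^d)$, with conditional mean $(n-1)V_d r^d$. Integrating against the Pareto density $\raddist$ gives
\[
\mathbb{E}[\delta_{in}(v)] \;=\; (n-1)\int_{r_0}^{1/2} V_d r^d\, \raddist(r)\, dr \;=\; (n-1)\,z,
\]
where $z = \int_{r_0}^{1/2} \frac{\eta}{r^{\alpha}} V_d r^d\, dr$ is exactly the quantity evaluated in Theorem~\ref{thm1}. (One could just as well use the outdegree statement of Theorem~\ref{thm1}: $\delta_{out}(v) \sim \mathrm{Binomial}(n-1,z)$ has mean $(n-1)z$.) Summing over the $n$ vertices, the expected number of edges equals $n(n-1)z$.

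Finally I would invoke the estimate from Theorem~\ref{thm1}, namely $z = \frac{\alpha-1}{\alpha-1-d}\cdot\frac{\log n}{n} + \mathcal{O}\!\big((\log n/n)^{\beta-1}\big)$ with $\beta = \frac{\alpha-1}{d} > 1$. Since $\beta - 1 > 0$, the remainder is of strictly lower order than the leading term, and since $\alpha > d+1$ the constant $\frac{\alpha-1}{\alpha-1-d}$ is positive and bounded; hence $z = \Theta(\log n/n)$. Therefore the expected number of edges is $n(n-1)z = \Theta\!\big(n^2 \cdot \tfrac{\log n}{n}\big) = \Theta(n\log n)$. There is essentially no obstacle here — the corollary is a one-line consequence of the $z$ computation — the only point meriting a word of care is confirming that the error term in $z$ is genuinely subdominant (which uses $\beta>1$) and that the prefactor is a positive constant (which uses $\alpha > d+1$).
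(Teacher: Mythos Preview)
Your proposal is correct and essentially identical to the paper's argument: use linearity of expectation to write the expected edge count as $n(n-1)z$ (the paper phrases this via outdegree, you via indegree, and you note the equivalence), then invoke $z=\Theta(\log n/n)$ from Theorem~\ref{thm1}. One small caveat: your justification that ``$\beta-1>0$'' makes the remainder $\mathcal{O}\big((\log n/n)^{\beta-1}\big)$ strictly lower order than $\log n/n$ is not quite right on its face (for $1<\beta<2$ that power is \emph{larger}), but this is moot since Theorem~\ref{thm1} already asserts $z=\Theta(\log n/n)$ outright, which is all you need.
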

\begin{proof}
To find the total number of edges, it suffices to count all the outdegrees edges. Since there are $n-1$ nodes that $v$ can connect to with independent probabilities, Theorem \ref{thm1} implies that the expected outdegree of any vertex is $\Theta(\log n)$. Hence, the total number of edges is  $\Theta(n\log n)$.
\end{proof}

\subsection{Clustering Coefficient of $\G$ \label{sec:clustering}}
The clustering coefficient of $\G$ is discussed in this section. This metric is important because it measures the propensity of nodes with common neighbors to themselves be connected \cite{highclustering, directedclustering1, directedclustering2}. Hence, real world networks often exhibit large clustering coefficients. We being by showing that $\G$ is rich in triangles, namely, $\G$ has $\Theta(n \log^2(n))$ triangles. Triangles are important because they are a measure of connectivity and contribute to the clustering coefficient that will be discussed later in this section. In addition, counting the total number of triangles in a network is itself a well-studied problem \cite{triangles1}. In a directed graph, there are fundamentally two types of triangles: $\textbf{Type 1}$ and $\textbf{Type 2}$ triangles. These triangles are shown in Figure \eqref{fig:triangle}. 
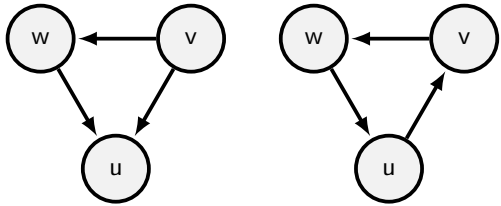
\begin{figure}[!htpb]
    \centering
    \subfloat{{ \begin{tikzpicture}[font=\sffamily]

        \tikzset{node style/.style={state, 
                                    minimum width=0.2cm,
                                    line width=0.5mm,
                                    fill=gray!10!white}}

        \node[node style] at (0, 0)     (u)     {w};
        \node[node style] at (2, 0)     (v)     {v};
        \node[node style] at (1, -1.73) (w) {u};

        \draw[every loop,
              auto=right,
              line width=0.5mm,
              >=latex]
            (u)     edge[right=20]            node {} (w)
            (v)     edge[right=20]            node {} (u)
            (v)     edge[right=20]            node {} (w);
    \end{tikzpicture}}}
    \qquad
    \subfloat{{\begin{tikzpicture}[font=\sffamily]

        \tikzset{node style/.style={state, 
                                    minimum width=0.2cm,
                                    line width=0.5mm,
                                    fill=gray!10!white}}

        \node[node style] at (0, 0)     (u)     {w};
        \node[node style] at (2, 0)     (v)     {v};
        \node[node style] at (1, -1.73) (w) {u};

        \draw[every loop,
              auto=right,
              line width=0.5mm,
              >=latex]
            (u)     edge[right=20]            node {} (w)
            (v)     edge[right=20]            node {} (u)
            (w)     edge[right=20]            node {} (v);
    \end{tikzpicture} }}
    \caption{\label{fig:triangle}A type $1$ triangle on the left and a type $2$ triangle on the right.}
\end{figure}
Due to the geometric nature of $\G$, it turns out that we do not need to account for both types of triangles. Say that vertices $u,v,w$ form a type $1$ or $2$ triangle if the edge relationships between $u,v$ and $w$ match Figure \eqref{fig:triangle}. Note that we say $v \leadsto w$ if there is a directed edge from $v$ to $w$.
\begin{lem}
\label{thm:Tri}
If vertices $u, v$ and $w$ form a type $2$ triangle in $\G$, then they also form a type $1$ triangle.
\end{lem}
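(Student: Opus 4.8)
The plan is to show that a type 2 triangle on $u,v,w$ forces enough geometric nesting of radii that one of its edges is ``doubled'', yielding a type 1 triangle on the same vertex set. Recall the edge rule: $a \leadsto b$ iff $d(a,b) \le r_b$. Reading off Figure~\eqref{fig:triangle}, a type 2 triangle (with the labeling $w,v,u$ as drawn) consists of the edges $v \leadsto u$, $u \leadsto w$, and $w \leadsto v$; that is,
\begin{equation*}
d(u,v) \le r_u, \qquad d(u,w) \le r_w, \qquad d(v,w) \le r_v.
\end{equation*}
A type 1 triangle on the same three vertices needs a configuration in which two of the edges point \emph{into} a common vertex and the third is arbitrary; concretely it suffices to produce some vertex $x \in \{u,v,w\}$ receiving edges from the other two, i.e.\ to show that at least one of the three inequalities $d(u,v)\le r_v$, $d(u,w)\le r_u$, $d(v,w)\le r_w$ also holds (each of these is the ``reverse'' of one of the three edges above, and adding any one of them to the cycle creates the in-in pattern of a type 1 triangle).

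First I would order the three radii; without loss of generality say $r_u$ is the largest, $r_u \ge r_v$ and $r_u \ge r_w$. From the type 2 hypothesis we have $d(u,w) \le r_w \le r_u$, which immediately gives $w \leadsto u$ — the reverse of the existing edge $u \leadsto w$. But now $u$ and $w$ have a mutual pair of edges, and together with either $v\leadsto u$ or $w \leadsto v$ the three vertices $u,v,w$ exhibit the type 1 pattern (two edges into $u$: namely $v\leadsto u$ and $w\leadsto u$, with the third edge $w\leadsto v$ present). Thus the key step is simply: the vertex with the largest radius always receives a ``reverse'' edge along whichever type 2 edge is incident to it and directed away from it, because the distance bound certifying that edge is controlled by a radius no larger than its own.

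The one point requiring care is bookkeeping of the edge directions in Figure~\eqref{fig:triangle}: I need to check that in a type 2 triangle \emph{every} vertex has exactly one outgoing and one incoming triangle-edge (it is a directed 3-cycle), so that whichever vertex turns out to have the maximal radius, the outgoing edge from it is of the form ``that vertex lies in the other's circle'', and hence is governed by the smaller radius and can be reversed. Once the directions are pinned down, the argument is the three-line case analysis above; the main (and only) obstacle is making sure the relabeling is consistent with the figure so that the produced doubled edge genuinely completes a type 1 rather than another type 2 triangle. I would close by noting that the same conclusion holds symmetrically regardless of which vertex is maximal, so no case is lost.
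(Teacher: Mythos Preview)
Your argument is correct and follows essentially the same idea as the paper: pick the vertex with an extremal radius in the $3$-cycle and observe that one of its incident edges must then be bidirectional, which produces the type~1 pattern. The only cosmetic differences are that the paper takes the vertex of \emph{minimum} radius (reversing its incoming cycle-edge) whereas you take the \emph{maximum} radius (reversing its outgoing cycle-edge), and your reading of the cycle's orientation from the figure is the reverse of the paper's --- neither affects the validity, since a directed $3$-cycle is symmetric under relabeling.
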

\begin{proof}
We note that if $u,v$ and $w$ form a type $2$ triangle then $d(w,v) \le r_w$, $d(v,u) \le r_v, d(u,w) \le r_u$.
Without loss of generality, suppose that $r_v$ is the minimum radius: $r_v = \min(r_u, r_v, r_w).$ Then, we have $r_v \le r_u$ so it follows that 
$$ d(v,u) \le r_v \le r_u.$$
Hence, there must also be a directed edge from $v$ to $u$. That is, $u,v$ and $w$ also form a type $1$ triangle, as desired.
\end{proof}
Note that type $1$ triangles measure the following phenomenon in real networks: If two people follow the same person, what is the probability that one follows the other? Indeed, in real world networks, the number of triangles is `large', which indicates that models such as preferential attachment do not accurately model real world networks \cite{highclustering}. However, the total number of triangles of type $1$ in $\G$ is quite `large.' 
\begin{thm} \label{thm3}
The expected number of directed, acyclic (type 1) triangles in a graph, $|S|$, for $S = \{u, v, w : v \leadsto u, w \leadsto u, v \leadsto w\}$, is $\mathbb{E}[|S|] = \Theta(n\log^2(n))$ as $n \to \infty$ for fixed $\alpha, d$ under the constraint that $\alpha > 2d$.
\end{thm}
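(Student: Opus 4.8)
The plan is a direct first-moment computation. Treating $S$ as a collection of ordered triples of distinct vertices, linearity of expectation and vertex-exchangeability give $\mathbb{E}[|S|] = n(n-1)(n-2)\,p$, where $p := P[\,v\leadsto u,\ w\leadsto u,\ v\leadsto w\,]$ for a fixed ordered triple $(u,v,w)$. Unwinding the edge rule, these three events read $d(u,v)\le r_u$, $d(u,w)\le r_u$, $d(w,v)\le r_w$; in particular $p$ does not depend on $r_v$. Conditioning on $r_u,r_w$ and using translation invariance to pin $u$ at the origin makes $v$ and $w$ independent uniform points on the torus, so
\begin{equation*}
p \;=\; \mathbb{E}_{r_u,r_w}\!\left[\int_{B(0,r_u)}\mathrm{Vol}\!\big(B(0,r_u)\cap B(w,r_w)\big)\,dw\right]\;=:\;\mathbb{E}_{r_u,r_w}\big[I(r_u,r_w)\big].
\end{equation*}
All balls here are torus balls; since every radius is at most $1/2$, each is an honest Euclidean ball of volume $V_d r^d$, and radii close to $1/2$ (which occur with probability $O(r_0^{\alpha-1})\to 0$) can be discarded without changing the order of magnitude.

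The geometric heart of the argument is the two-sided bound
\begin{equation*}
I(r_u,r_w)\;=\;\Theta\!\big(r_u^{\,d}\,\min(r_u,r_w)^{d}\big),
\end{equation*}
with constants depending only on $d$. The upper bound is immediate from $\mathrm{Vol}(B(0,r_u)\cap B(w,r_w))\le V_d\min(r_u,r_w)^d$ and $\mathrm{Vol}(B(0,r_u))=V_dr_u^d$. For the lower bound one keeps only those $w$ for which a fixed smaller ball lies in both: if $r_w\le r_u/2$ then $B(w,r_w)\subseteq B(0,r_u)$ whenever $d(0,w)\le r_u-r_w$, contributing $\gtrsim r_u^{d}r_w^{d}$; and if $r_w\ge r_u/2$ then $B(0,r_u/2)\subseteq B(0,r_u)\cap B(w,r_w)$ whenever $d(0,w)\le r_u/2$, contributing $\gtrsim r_u^{2d}$. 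In both regimes this matches $r_u^{d}\min(r_u,r_w)^d$.

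It then remains to estimate $\mathbb{E}_{r_u,r_w}\big[r_u^{d}\min(r_u,r_w)^d\big]$ for $r_u,r_w$ i.i.d.\ with density $f(r)=\eta/r^{\alpha}$ on $[r_0,1/2]$. Splitting on $\{r_w\le r_u\}$ versus its complement and doing the inner integral first reduces everything to elementary Pareto integrals $\int_{r_0}^{1/2}r^{c}\,dr$ with $c\in\{d-\alpha,\ 2(d-\alpha)+1,\ 2d-\alpha\}$; the standing assumption $\alpha>d+1$ makes the first two exponents strictly below $-1$, so those integrals are governed by the endpoint $r_0$, while the hypothesis $\alpha>2d$ keeps the last term subdominant. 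One checks that the two leading contributions do not cancel (their net coefficient is $\dfrac{2\alpha-2-d}{2(\alpha-1)(\alpha-d-1)^2}>0$ since $\alpha>d+1$), and, using $\eta=\Theta(r_0^{\alpha-1})$, everything collapses to
\begin{equation*}
\mathbb{E}_{r_u,r_w}\big[r_u^{d}\min(r_u,r_w)^d\big]\;=\;\Theta\!\big(r_0^{2d}\big)\;=\;\Theta\!\left(\frac{\log^2 n}{n^2}\right),
\end{equation*}
whence $\mathbb{E}[|S|]=n(n-1)(n-2)\cdot\Theta(\log^2 n/n^2)=\Theta(n\log^2 n)$.

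I expect the main obstacle to be the middle step: obtaining matching upper and lower bounds on $I(r_u,r_w)$ uniformly over all $(r_u,r_w)\in[r_0,1/2]^2$, in particular the near-equal-radius regime $r_w\asymp r_u$ and the torus wrap-around for radii approaching $1/2$. Once $I$ is sandwiched between constant multiples of $r_u^{d}\min(r_u,r_w)^d$, the remaining work is the routine but slightly delicate asymptotic evaluation of the Pareto integrals as $r_0\to 0$, where one must track which endpoint dominates each term and verify that the leading $r_0^{2d}$ comes with a nonzero coefficient.
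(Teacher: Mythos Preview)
Your proposal is correct in outline and arrives at the same conclusion as the paper, but the geometric core is handled differently. The paper never tries to pin down $I(r_u,r_w)$ up to constants. For the upper bound it simply drops the constraint $d(v,w)\le r_w$ altogether, leaving $p\le V_d^2\eta\int_{r_0}^{1/2}r_u^{2d-\alpha}\,dr_u$, which after multiplying by $\eta$ gives the $r_0^{2d}$ order. For the lower bound it restricts to the event $\{r_w>2r_u\}$, where the triangle inequality forces $d(v,w)\le 2r_u<r_w$ automatically, so again the inner integrals collapse to $(V_dr_u^d)^2$ and only a one-dimensional $r_u$-integral remains. Your route instead proves the two-sided estimate $I(r_u,r_w)=\Theta\big(r_u^d\min(r_u,r_w)^d\big)$ and then integrates this against the product Pareto law, splitting on which radius is smaller. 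The paper's approach is quicker because both bounds reduce immediately to a single integral in $r_u$; your approach is sharper (it tracks the true dependence on $r_w$) and, as your own analysis of the exponents $d-\alpha$, $2(d-\alpha)+1$, $2d-\alpha$ shows, it in fact yields the $\Theta(r_0^{2d})$ answer under the weaker standing hypothesis $\alpha>d+1$ rather than $\alpha>2d$.

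One small slip to fix: in your lower bound for $I$, the claim ``if $r_w\ge r_u/2$ then $B(0,r_u/2)\subseteq B(0,r_u)\cap B(w,r_w)$ whenever $d(0,w)\le r_u/2$'' is not quite right, since for $x\in B(0,r_u/2)$ one only gets $d(x,w)\le r_u$, which needs $r_w\ge r_u$, not $r_w\ge r_u/2$. The remedy is trivial: either split at $r_w\gtrless r_u$ instead of $r_u/2$, or shrink the auxiliary ball to $B(0,r_u/4)$ and require $d(0,w)\le r_u/4$, which gives $d(x,w)\le r_u/2\le r_w$ and still contributes $\gtrsim r_u^{2d}$. With that adjustment your sandwich on $I$ is valid uniformly on $[r_0,1/2]^2$, and the rest of your computation goes through as written.
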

\begin{proof}We give an abridged version of the proof. Please see the appendix for full details. Throughout the proof, $P[\cdot]$ denotes the probability of an event while $p(\cdot)$ denotes the probability density of a random variable.
By linearity of expectation, $$\mathbb{E}[|S|] = n(n-1)(n-2) \, P(v \leadsto u, w \leadsto u, v \leadsto w)$$ for randomly selected vertices $u, v, w$. Without loss of generality, we may take $u$ to be located at the origin. Then condition on the locations of $v$ and $w$, denoted by $\vec{x_v}$ and $\vec{x_w}$, as well as the radii of circles centered at $u$ and $w$, denoted by $r_u$ and $r_w$. 

It is clear that the radii of circles corresponding to $u$ and $w$, along with the locations of $v$ and $w$ are independent. Additionally, edges are conditionally independent given these radii and their locations. The probability density of the points $x_v$ and $x_w$ is $1$ since $v, w$ are chosen uniformly over the unit torus: i.e. $p(\vec{x_v}) = p(\vec{x_w}) = 1$. Furthermore, $P[v \leadsto u|\vec{x_v}, r_u] = \mathbbm{1}[d(\vec{x_v}, 0) < r_u]$, as we draw an edge from $v$ to $u$ if and only if $\vec{x_v}$ is inside the circle of radius $r_u$ centered at $u$ (the origin). Similarly, $P[w \leadsto u|\vec{x_w}, r_u] = \mathbbm{1}[d(\vec{x_w}, 0) < r_u]$. Finally, we have that $P[v \leadsto w|\vec{x_w}, \vec{x_v}, r_w] = \mathbbm{1}[d(\vec{x_v}, \vec{x_w}) < r_w]$, as we draw an edge from $v$ to $w$ if and only if the distance between their coordinates is less than the radius of $w$'s circle. Note that $\mathbbm{1}[d(\vec{x_v}, \vec{x_w}) < r_w] \leq 1$ is always true. Using these ideas, we arrive at the following inequality
\[ P[v \leadsto u, w \leadsto u, v \leadsto w] \leq V_d^2 \eta \int_{r_0}^{\frac{1}{2}} r_u^{2d - \alpha} = V_d^2 \eta \left[\frac{r_u^{2d - \alpha + 1}}{2d - \alpha + 1}\right]^{\frac{1}{2}}_{r_0}.\]
As $n$ becomes large, for any fixed $d$ we have that $r_0$ goes to $0$. This then implies (for $\alpha > 2d$) that the dominant term in the above (and a valid, asymptotically tight upper bound) is:
\[ V_d^2 \eta \frac{r_0^{2d - \alpha + 1}}{\alpha - 2d - 1} = C_1 \frac{\log^2(n)}{n^2}\]
where $C_1$ is some constant independent of $n$. Using similar reasoning as above, we can get a lower bound on $P[v \leadsto u, w \leadsto u, v \leadsto w]$. Specifically, in the appendix, it is shown that
\begin{align*}
    &P[v \leadsto u, w \leadsto u, v \leadsto w] \ge V_d^2 \eta^2 \int_{r_0}^{\frac{1}{4}} dr_u r_u^{2d-\alpha} \left(\frac{(2r_u)^{1-\alpha} - (1/2)^{1-\alpha}}{\alpha - 1}\right) \\
    &= \frac{V_d^2 \eta^2}{\alpha - 1} \left[ \frac{2^{-\alpha}\text{ } r_0^{2d - 2\alpha - 2}}{\alpha -d - 1}\left( 1 - (4r_0)^{2(\alpha + 1)-2d} \right) - \frac{2^{\alpha-1}\text{ }r_0^{2d - \alpha + 1}}{\alpha -2d - 1} \left(1 - (4r_0)^{ \alpha - 1-2d}\right) \right].
\end{align*}
For fixed $d$, as $n$ goes to infinity, $r_0$ goes to $0$, so that the leading order term here is the one containing the factor $r_0^{2d-2\alpha -2}$ (as $\alpha > 2d > d - 1$). Thus, we have the lower bound
\[ P[v \leadsto u, w \leadsto u, v \leadsto w] \ge (C_2 - \epsilon) \frac{\log^2(n)}{n^2} \]
for any $\epsilon > 0$, for large enough $n$. This proves that, for fixed $\alpha, d$:
\[ C_1 \frac{\log^2(n)}{n^2} \geq P[v \leadsto u, w \leadsto u, v \leadsto w] \geq (C_2 - \epsilon) \frac{\log^2(n)}{n^2}. \]
Multiplying by $n(n-1)(n-2)$, we get:
\[ C_1 n\log^2(n) \geq \mathbb{E}[|S|] \geq (C_2 - \epsilon) n\log^2(n) \]
for any $\epsilon$, given that $n$ is sufficiently large. This proves that $\mathbb{E}[|S|] \in \Theta(n\log^2(n))$, as desired.
\end{proof}
We now turn to the clustering coefficient. For a directed graph, there exist many different possible definitions of the clustering coefficient \cite{directedclustering1,directedclustering2}. One natural choice is $\overline{C_{\text{in}}}$, which is defined as:
$$ \overline{C_{\text{in}}} = \frac{1}{|V|}\sum_{v \in V} c_v $$
where $c_u$ is the local clustering coefficient for vertex $u$ and is defined as:
\[ c_u = \frac{|\{v, w : v \leadsto u, w \leadsto u, v \leadsto w\}|}{d_u(d_u - 1)} \]
where $d_u$ is the degree of vertex $u$. In this section, we find the asymptotic expectation value of this clustering coefficient as $n \to \infty$, and show it approaches a constant (dependent on $d$ and $\alpha$). Note that the expectation is over all possible generated graphs. This is in contrast to other random graph models such as Erd{\H o}s-Renyi random graphs and preferential attachment graphs where the clustering coefficient are very small or approach $0$ as $n \rightarrow \infty$ \cite{clustering_prefattachment, clustering_erdos}. We then examine the case $\alpha \to \infty$, and show that it matches results from the random geometric graph model with radius $r_0$. This is in line with the correspondence outlined in the previous section, as in this limit, all edges become undirected and the radius concentrates at $r_0$, as in the case of random geometric graphs. Our main theorem is as follows:
\begin{thm} \label{thm5}
As $n \to \infty$ for fixed $d$, $\alpha$ satisfying $\alpha > 2d + 1$, the expectation of the clustering coefficient approaches a constant which is $\textbf{independent}$ of $n$ (the number of vertices). For $\textbf{odd}$ $d$, the constant can be calculated exactly:
\[ (\alpha - 1)^2 \frac{d!!}{(d-1)!!} \sum_{k=0}^{\frac{d-1}{2}} \frac{(-1)^k}{2k+1} \dbinom{\frac{d-1}{2}}{k} \left[\frac{1}{(\alpha - 1)^2 - d^2} - \frac{d}{2^{2k+1}(2k+d+1)}\left(\frac{1}{(\alpha - 1)^2 - (2k+d+1)^2}\right) \right].  \]
\end{thm}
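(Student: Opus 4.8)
The plan is to reduce $\mathbb{E}[\overline{C_{\text{in}}}]$ to a single vertex, condition away the troublesome denominator, and then evaluate a two–dimensional integral against the radius law. By vertex symmetry and translation invariance of the torus, $\mathbb{E}[\overline{C_{\text{in}}}] = \mathbb{E}[c_u]$ for a fixed vertex $u$, which I place at the origin, adopting the convention $c_u := 0$ whenever $u$ has fewer than two in-neighbors. First I would condition on the radius $r_u = r$ and on the in-degree $d_u = m$. Given $r_u = r$, the in-neighbors of $u$ are exactly the vertices landing in the ball $B(u,r)$; given that their number is $m$, those $m$ vertices are i.i.d.\ uniform in $B(u,r)$ and their own radii are i.i.d.\ Pareto, all independent. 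Writing $c_u = \frac{1}{m(m-1)}\sum_{i\neq j}\mathbbm{1}[v_i\leadsto v_j]$ and using linearity of expectation, $\mathbb{E}[c_u \mid r_u = r, d_u = m] = q(r)$ for every $m\ge 2$, where $q(r)$ is the probability that, for two independent uniform points of $B(u,r)$, the first lies in the (Pareto-radius) circle of the second. This is the crucial simplification: although $c_u$ is a ratio of correlated random variables with a heavy-tailed denominator, conditioning makes its conditional mean independent of $m$.

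Next, $q(r) = \int_{r_0}^{1/2}\eta\rho^{-\alpha}\,\phi_d(\rho/r)\,d\rho$, where $\phi_d(s)$ is the probability that two independent uniform points of a unit $d$-ball lie within distance $s$ (so $\phi_d(s) = 1$ for $s\ge 2$); torus distance may be replaced by Euclidean distance here because a discrepancy can occur only when $r_u\ge 1/4$ or $r_w\ge 1/4$, an event of probability at most $2\int_{1/4}^{1/2}\eta\rho^{-\alpha}d\rho = o(1)$. Since the conditional mean in-degree $(n-1)V_d r^d \ge (n-1)V_d r_0^d \sim \log n$, we get $P[d_u\ge 2\mid r_u = r]\to 1$ uniformly in $r\in[r_0,1/2]$, whence $\mathbb{E}[c_u] = \mathbb{E}_{r_u}[q(r_u)\,P[d_u\ge 2\mid r_u]] \to \mathbb{E}_{r_u}[q(r_u)]$. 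Rescaling $r = r_0 x$, $\rho = r_0 y$ turns $\eta r^{-\alpha}dr$ into $\frac{\alpha-1}{1-(2r_0)^{\alpha-1}}\,x^{-\alpha}dx$ on $[1,1/(2r_0)]$, which tends to the Pareto density $(\alpha-1)x^{-\alpha}$ on $[1,\infty)$; as the integrand is dominated by a constant times $x^{-\alpha}y^{-\alpha}$ (integrable since $\alpha>1$), dominated convergence yields
\[ \lim_{n\to\infty}\mathbb{E}[\overline{C_{\text{in}}}] \;=\; (\alpha-1)^2\int_1^\infty\!\!\int_1^\infty x^{-\alpha}y^{-\alpha}\,\phi_d(y/x)\,dx\,dy \;=:\; C(\alpha,d), \]
a constant independent of $n$. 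This already proves the first assertion, for every $d$ and every $\alpha>d+1$.

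It remains to evaluate $C(\alpha,d)$ for odd $d$. I would use the ball line–picking density $\phi_d(s) = \int_0^s d\,t^{d-1}\,I_{1-t^2/4}\!\big(\tfrac{d+1}{2},\tfrac12\big)\,dt$; for odd $d$ the half-integer incomplete beta function simplifies to $I_{1-t^2/4}\!\big(\tfrac{d+1}{2},\tfrac12\big) = \frac{\int_{t/2}^1(1-w^2)^{(d-1)/2}dw}{\int_0^1(1-w^2)^{(d-1)/2}dw}$ with $\int_0^1(1-w^2)^{(d-1)/2}dw = \tfrac{(d-1)!!}{d!!}$, so that $f_d$ is an explicit polynomial on $[0,2]$ once $(1-w^2)^{(d-1)/2}$ is expanded by the binomial theorem — this is what produces the prefactor $\tfrac{d!!}{(d-1)!!}$ and the sum $\sum_k\tfrac{(-1)^k}{2k+1}\binom{(d-1)/2}{k}$ in the statement, and the coefficient $\tfrac{d}{2^{2k+1}(2k+d+1)}$ of $s^{2k+d+1}$ in $\phi_d$. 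Splitting the $(x,y)$ integral according to whether $y/x$ is $<1$, in $[1,2]$, or $>2$ reduces $C(\alpha,d)$ to a combination of elementary one-dimensional integrals $\int_0^1 t^{\alpha-2+j}dt = \tfrac{1}{(\alpha-1)+j}$ and $\int_1^2 t^{j-\alpha}dt$, and the matching-power terms pair off through $\tfrac{1}{(\alpha-1)-j}-\tfrac{1}{(\alpha-1)+j} = \tfrac{2j}{(\alpha-1)^2-j^2}$ to give the denominators $(\alpha-1)^2-d^2$ and $(\alpha-1)^2-(2k+d+1)^2$, with the relevant exponents $j$ running through $d$ and $d+1,d+3,\dots,2d$. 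I expect this last bookkeeping to be the main obstacle, along with checking that the hypothesis $\alpha>2d+1$ is exactly what keeps every denominator $(\alpha-1)^2-j^2$ positive (so no pole is crossed in the $\int_1^\infty$ representation); the probabilistic reductions in the first two paragraphs are routine once the conditioning trick is in place.
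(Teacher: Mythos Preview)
Your proposal is correct and takes essentially the same route as the paper: reduce to a single vertex, condition on $(r_u,d_u)$ so that the $d_u(d_u-1)$ denominator cancels and the conditional mean of $c_u$ becomes the double radius integral of the ball line--picking CDF $\phi_d$, then pass to the limit $r_0\to 0$ and evaluate via the explicit odd-$d$ polynomial for $\phi_d$. Your treatment is tidier in a few places the paper glosses over --- you handle the degenerate case $d_u<2$, the torus/Euclidean discrepancy, and the region $y/x>2$ (where $\phi_d\equiv 1$) explicitly, and your rescaling-plus-dominated-convergence argument actually shows the limiting constant exists for every $\alpha>d+1$ rather than only $\alpha>2d+1$ --- but the skeleton of the argument is the same.
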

\begin{proof}
We give an abridged proof. Please see the appendix for full details. By linearity of expectation (where the expectation is over all possible DRGG), $\mathbbm{E}[\overline{C}_\text{in}] = \mathbbm{E}[c_u]$ for a randomly selected vertex $u$. Moreover,
\[ \mathbbm{E}[c_u] = \mathbbm{E}\left[\frac{|\{v, w : v \leadsto u, w \leadsto u, v \leadsto w\}|}{d_u(d_u - 1)}\right] = \sum_{k} P(d_u = k) \frac{\mathbbm{E}[|\{v, w : v \leadsto u, w \leadsto u, v \leadsto w\}| | d_u = k]}{k(k - 1)} \]
from the law of iterated expectation. Furthermore, 
\[
\mathbbm{E}[|\{v, w : v \leadsto u, w \leadsto u, v \leadsto w\}| \, | d_u = k] = k(k - 1) P(v \leadsto w | v \leadsto u, w \leadsto u, d_u = k)
\]
by linearity of expectation. Then, we can substitute this expression into the sum given for $\mathbbm{E}[c_u]$:
\[ \mathbbm{E}[\overline{C}_\text{in}] = \sum_k P[d_u = k]P[v \leadsto w | v \leadsto u, w \leadsto u, d_u = k]. \]

After some computation, detailed in the Appendix, one obtains:
$$ \mathbbm{E}[\overline{C}_\text{in}] = \int dr_u p(r_u)\int dr_w p(r_w) \int_{||\vec{x_v}|| < r_u} d^d\vec{x_v}  \int_{||\vec{x_w}|| < r_u} d^d\vec{x_w} \frac{\mathbbm{1}[||\vec{x_v} - \vec{x_w}|| < r_w] }{V_d^2 r_u^{2d}}.$$

For odd $d$, we can compute this explicitly (work detailed in the Appendix):
\begin{equation}
\label{eq:evc_in}
\mathbbm{E}[\overline{C}_\text{in}] = \frac{d d!!}{(d-1)!!} \sum_{k=0}^{\frac{d-1}{2}} \frac{(-1)^k}{2k+1} \dbinom{\frac{d-1}{2}}{k} \left(\frac{1}{d} \frac{(\alpha-1)^2}{(\alpha - 1)^2 - d^2} - \frac{1}{2^{2k+1}(2k+d+1)} \frac{(\alpha-1)^2}{(\alpha - 1)^2 - (2k+d+1)^2} \right). 
\end{equation}

For even $d$, we cannot get nice closed forms since the density function of two randomly chosen points does not have a nice form like the one that exists for odd $d$. However, we can perform asymptotic analysis and show that the clustering coefficient also approaches a constant as $n \rightarrow \infty$ for even $d$ as well.
\end{proof}
\begin{rem}
Note that from the expression in Eq. \eqref{eq:evc_in}, it is clear that our value of $\mathbb{E}[\overline{C}_{in}]$ is a constant independent of $n$. This is different than other standard models such as the Erd{\H o}s-Renyi random graphs which are known to have low clustering coefficient \cite{clustering_erdos} and the preferential attachment model which have the property that the clustering coefficient approaches $0$ as $n \rightarrow \infty$ \cite{clustering_prefattachment}.
\end{rem}
It is interesting to take the limit as $\alpha \to \infty$ in Eq.\eqref{eq:evc_in} which gives:
$$ \lim_{\alpha \to \infty} \mathbbm{E}[\overline{C}_\text{in}] = \frac{d d!!}{(d-1)!!} \sum_{k=0}^{\frac{d-1}{2}} \frac{(-1)^k}{2k+1} \dbinom{\frac{d-1}{2}}{k} \left(\frac{1}{d} - \frac{1}{2^{2k+1}(2k+d+1)} \right).$$ This limit $\textbf{matches the known clustering coefficient derived by Dall and Christensen}$ of the standard RGG model \cite{rggs}. This is not surprising since taking $\alpha \rightarrow \infty$ in $\G$ results in the standard RGG model, as explored in the next section. The clustering coefficient for various odd dimensions along with their values as $\alpha \rightarrow \infty$ is shown in in Table \ref{table:limit_clustering} in the appendix.

\subsection{\label{undir_edge}Undirected edges and graph Limits}
In this section, we prove a lemma which shows that, given the existence of an edge, there is an asymptotically constant probability of an edge in the opposite direction. This shows that we have a positive fraction of what we have termed `undirected edges.'

\begin{lem} \label{lem_back}
For fixed $\alpha, d$, and for any randomly selected vertices $u, v$, as $n \to \infty$, we have that $P[u \leadsto v | v \leadsto u] \to \frac{2\beta - 2}{2\beta - 1}$ where $\beta = \frac{\alpha-1}d.$
\end{lem}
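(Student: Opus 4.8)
The plan is to write $P[u\leadsto v \mid v\leadsto u] = \dfrac{P[u\leadsto v,\ v\leadsto u]}{P[v\leadsto u]}$ and compute each factor to leading order as $n\to\infty$ (equivalently as $r_0\to 0$). The denominator is already in hand: $P[v\leadsto u]$ is exactly the quantity $z$ from the proof of Theorem~\ref{thm1}, whose leading behavior is $z\sim \frac{(\alpha-1)V_d r_0^{\,d}}{\alpha-d-1}\sim\frac{\alpha-1}{\alpha-d-1}\cdot\frac{\log n}{n}$. So the whole problem is to pin down the numerator to the same precision and divide.

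For the numerator, the key observation is that both edge events depend only on the single torus-distance $\rho:=d(u,v)$ together with the two (independent, Pareto) radii: $v\leadsto u\iff\rho\le r_u$ and $u\leadsto v\iff\rho\le r_v$, so the joint event is precisely $\{\rho\le\min(r_u,r_v)\}$. Conditioning on $r_u=a$, $r_v=b$ and using that a uniform point on the unit torus lies within distance $\rho\le\tfrac12$ of a fixed point with probability $V_d\rho^{\,d}$, and then exploiting the symmetry of the integrand to restrict to $a\le b$, gives
\[
P[u\leadsto v,\ v\leadsto u]\;=\;V_d\int_{r_0}^{1/2}\!\!\int_{r_0}^{1/2} f(a)f(b)\,\min(a,b)^{d}\,da\,db\;=\;2V_d\int_{r_0}^{1/2}\! f(b)\left(\int_{r_0}^{b} f(a)\,a^{d}\,da\right)db.
\]

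I would then evaluate the two nested power-law integrals with $f(r)=\eta/r^{\alpha}$. The inner integral is elementary and, since $d-\alpha+1<0$, its dominant part as $r_0\to0$ is the $r_0^{\,d-\alpha+1}$ term; feeding this into the outer integral produces two powers of $r_0$, of which $r_0^{\,d-2\alpha+2}$ dominates (this is where $\alpha>d+1$ enters). The coefficient of this leading term is a genuine nonzero constant, arising from the cancellation $\frac{1}{\alpha-1}-\frac{1}{2\alpha-d-2}=\frac{\alpha-d-1}{(\alpha-1)(2\alpha-d-2)}>0$; inserting $\eta\sim(\alpha-1)r_0^{\alpha-1}$ (valid since $(2r_0)^{\alpha-1}\to0$) then yields $P[u\leadsto v,\ v\leadsto u]\sim\frac{2(\alpha-1)V_d r_0^{\,d}}{2\alpha-d-2}$. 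Taking the ratio with $z$, the common factor $V_d r_0^{\,d}$ cancels, leaving $\frac{2(\alpha-d-1)}{2\alpha-d-2}$, and substituting $\alpha-1=\beta d$ converts this to $\frac{2(\beta-1)}{2\beta-1}=\frac{2\beta-2}{2\beta-1}$, as claimed.

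The main obstacle is bookkeeping rather than conceptual: one must verify that every discarded term — the boundary contributions at $r=\tfrac12$ in both integrals, and the factor $1-(2r_0)^{\alpha-1}$ hidden in the normalizer $\eta$ — is of strictly smaller order in $r_0$ than the terms retained, so that it is legitimate to replace numerator and denominator by their leading asymptotics before dividing. This is also where the hypothesis $\alpha>d+1$ gets used repeatedly, and it is what guarantees that the limiting ratio is a finite, nonzero constant.
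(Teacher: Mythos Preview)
Your proposal is correct and follows essentially the same approach as the paper: write the conditional probability as a ratio, compute numerator and denominator to leading order in $r_0$, and divide. The only difference is organizational---the paper integrates out the radii first to obtain $S_{d-1}\int_0^{1/2} r^{d-1}\,P[r_u>r]\,P[r_v>r]\,dr$ and then splits the $r$-integral at $r_0$, whereas you integrate out the position first to get $V_d\,\min(a,b)^{d}$ and handle the $\min$ by symmetry; both routes yield the same leading term $\dfrac{2(\alpha-1)V_d r_0^{\,d}}{2\alpha-d-2}$ for the numerator and hence the same limit.
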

\begin{proof}
The conditional probability can be written as $\frac{P[u \leadsto v, v \leadsto u]}{P[v \leadsto u]}$. We now compute the numerator and denominator separately.
As before, we can without loss of generality situate $u$ at the origin. Then,
\[ P[u \leadsto v, v \leadsto u] =  \int_{r_0}^{\frac{1}{2}} dr_u
\int_{[0,1]^d} d^d\vec{x_v}
\int_{r_0}^{\frac{1}{2}} dr_v 
P[u \leadsto v, v \leadsto u | r_u, \vec{x_v}, r_v] \cdot p(r_u, \vec{x_v}, r_v). \]
Since $u, v$ share an undirected edge if and only if their separation is less than the radii of both of their circles, we have $$P[u \leadsto v, v \leadsto u | r_u, \vec{x_v}, r_v] = \mathbbm{1}[d(\vec{x_v}, 0) < r_u] \mathbbm{1}[d(\vec{x_v}, 0) < r_v]$$
Moreover,
$p(r_u, \vec{x_v}, r_v) = p(r_u) p(\vec{x_v}) p(r_v)$ as each of these $3$ quantities is chosen independently. Then, the integral is rewritten:
$$\int_{||\vec{x_v}|| < \frac{1}{2}} d^d\vec{x_v} \int_{||\vec{x_v}||}^{\frac{1}{2}} dr_u p(r_u)
\int_{||\vec{x_v}||}^{\frac{1}{2}} dr_v 
p(r_v) = \int_{||\vec{x_v}|| < \frac{1}{2}} d^d\vec{x_v}
P[r_u > ||\vec{x_v}||]P[r_v > ||\vec{x_v}||].$$
The integrand only depends on the norm of $\vec{x_v}$; we use spherical coordinates to write it as:
\[ S_{d-1} \int_0^{\frac{1}{2}} r^{d-1} P[r_u > r]P[r_v > r] dr \]
where $S_{d-1} = \frac{2\pi^{d/2}}{\Gamma(\frac{d}{2})}$. We can split this integral into portions that go from $0$ to $r_0$, where $P[r_u > r]$ and  $P[r_u > r]$ each evaluate to $1$, and a portion going from $r_0$ to $\frac{1}{2}$. By direct computation and for $b \geq r_0$, one obtains
\[ \int_{b}^{\frac{1}{2}} dr 
p(r) = \frac{\eta}{\alpha - 1} \left(\frac{1}{b^{\alpha - 1}} - 2^{\alpha - 1} \right), \]
and we have
\[ P[u \leadsto v, v \leadsto u] = S_{d-1} \left[ \frac{r_0^d}{d} + \frac{\eta^2}{(\alpha - 1)^2} \int_{r_0}^{\frac{1}{2}} r^{d-1} \left(\frac{1}{r^{\alpha -1}} - 2^{\alpha - 1} \right)^2 dr \right].\]
We are interested in the behavior of this integral as $n \to \infty$, or equivalently $r_0 \to 0.$ Therefore, for sufficiently large $n$,
$$\int_{r_0}^{1/2} r^{d-1} \left(\frac{1}{r^{\alpha-1}} - 2^{\alpha-1} \right) \ dr = \frac{r_0^{d - 2 \alpha + 2}}{2 \alpha - d -2} + \mathcal{O}(r_0^{d - \alpha + 1}) + C$$
for some constant $C$. Using the Taylor expansion of $\eta$ around $0$ and noting that $\alpha > d + 1$ (see \ref{model}), we get 
\begin{align*}
    P[u \leadsto v, v \leadsto u] &= S_{d-1} \left( \frac{r_0^d}{d} +  \frac{r_0^{d}}{2\alpha-d-2} \right) + \mathcal{O}(r_0^{d+\alpha- 1}).
\end{align*}

\noindent For the denominator of the conditional probability, assuming that $u$ is located at the origin, we compute
$$P[v \leadsto u] = \int_{r_0}^{\frac{1}{2}} dr_u p(r_u) P[v \leadsto u | r_u].$$
Note that $P[v \leadsto u | r_u] = P[||\vec{x_v}|| < r_u] = V_d r_u^d.$ Hence,
\[ P[v \leadsto u] = V_d \eta \int_{r_0}^{\frac{1}{2}} r^{-\alpha + d} \ dr =  V_d r_0^d \left(\frac{\alpha - 1}{\alpha - d - 1}\right) + \mathcal{O}(r_0^{\alpha -1}).\]
Dividing $P[u \leadsto v, v \leadsto u]$ by $P[v \leadsto u]$ and using the fact that $r_0 \rightarrow 0$, we get that
\[ \lim_{n \rightarrow \infty} P[u \leadsto v | v \leadsto u] = \frac{S_{d-1}}{V_d} \left( \frac{1}{d} + \frac{1}{2\alpha-d-2} \right) \frac{\alpha - d - 1}{\alpha - 1} = \frac{2\beta - 2}{2\beta - 1}, \]
since $\frac{S_{d-1}}{V_d} = d$ where $\beta = \frac{\alpha -1}d.$

\end{proof}
\begin{rem}
We can use Lemma \ref{lem_back} to understand the limiting behavior of $\G$ for fixed $n$ and $d$ and as $\alpha \rightarrow \infty$. In this case, the probability density function for the radii of the vertices converges to a delta distribution at the minimum radius $r_0$. However, we can actually say something stronger. In the $\alpha \rightarrow \infty$ case, $\G$ actually converges to a $\textbf{standard undirected random geometric graph}$ with fixed radius $r_0$. Note that $r_0$ is the sharp connectivity threshold for undirected random geometric graphs with fixed radius. To show this, note that the asymptotics we arrived at $P[u \leadsto v | v \leadsto u]$ are also valid for fixed $n$ and $d$ and $\alpha \rightarrow \infty$ since the $r_0$ term still dominates. Thus,
$$\lim_{\alpha \rightarrow \infty} P[u \leadsto v | v \leadsto u] = \frac{S_{d-1}}{dV_d} = 1.$$ This proves the following corollary.
\end{rem}
\begin{corollary}
For fixed $n$ and $d$, $\lim_{\alpha \rightarrow \infty} \G$ converges to the standard random geometric graph model (RGG).
\end{corollary}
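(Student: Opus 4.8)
The plan is to read this as a statement about convergence in distribution of random graphs: since $n$ and $d$ are fixed, the set of directed graphs on the labelled vertex set $\{1,\dots,n\}$ is finite, so it suffices to show that for every fixed directed graph $H$ on these vertices, $P[\G = H] \to P[\mathrm{RGG}(n,r_0) = H]$ as $\alpha \to \infty$, where $\mathrm{RGG}(n,r_0)$ is the standard random geometric graph on $n$ uniformly chosen torus points with connection radius $r_0$, each of whose undirected edges is regarded as a pair of antiparallel directed edges (equivalently, one shows that the underlying undirected graph of $\G$ converges in law to $\mathrm{RGG}(n,r_0)$). I would prove this by coupling the two models on a common point configuration and showing they coincide with probability tending to $1$.

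First I would record the elementary fact that the Pareto law $\raddist(r) = \eta/r^\alpha$ on $[r_0,1/2]$ concentrates at its left endpoint $r_0$ as $\alpha\to\infty$: a direct computation gives, for any fixed $c>0$,
\[
P[r_v < r_0+c] \;=\; \frac{\eta}{\alpha-1}\bigl(r_0^{1-\alpha}-(r_0+c)^{1-\alpha}\bigr) \;=\; \frac{1-\bigl(\tfrac{r_0}{r_0+c}\bigr)^{\alpha-1}}{1-(2r_0)^{\alpha-1}} \;\xrightarrow[\alpha\to\infty]{}\; 1,
\]
since $r_0/(r_0+c)<1$ and $2r_0<1$. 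Next I would condition on the vertex locations $\vec{x}_1,\dots,\vec{x}_n$, which are i.i.d.\ uniform on the torus and do not depend on $\alpha$. Almost surely no pairwise distance equals $r_0$, so the gap
\[
\epsilon_0 \;:=\; \min\Bigl(\tfrac12-r_0,\ \min\{\, d(\vec{x}_i,\vec{x}_j)-r_0 \;:\; i\neq j,\ d(\vec{x}_i,\vec{x}_j)>r_0 \,\}\Bigr)
\]
is an almost surely strictly positive random variable. On the event $E_\alpha := \{\, r_v < r_0+\epsilon_0 \text{ for every vertex } v \,\}$, the directed edge $u\leadsto v$ is present iff $d(\vec{x}_u,\vec{x}_v)\le r_v$, and by the definition of $\epsilon_0$ this holds iff $d(\vec{x}_u,\vec{x}_v)\le r_0$ — a symmetric condition; hence on $E_\alpha$ the graph $\G$ coincides exactly with the bidirectional $\mathrm{RGG}(n,r_0)$ built on the same points. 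Because the radii are drawn independently of each other and of the locations, the concentration estimate yields $P[E_\alpha \mid \vec{x}_1,\dots,\vec{x}_n] = \bigl(P[r_v < r_0+\epsilon_0]\bigr)^n \to 1$ almost surely, and then dominated convergence gives $P[E_\alpha]\to 1$. Consequently
\[
\bigl|P[\G=H]-P[\mathrm{RGG}(n,r_0)=H]\bigr| \le P[E_\alpha^{\,c}] \to 0
\]
for every $H$, which is the desired convergence (indeed in total variation).

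The routine parts are the ``Pareto concentrates at $r_0$'' computation and the trivial observation that equal radii force every edge to be bidirectional. The step I expect to be the main obstacle — or at least the one requiring care — is the interchange of limits: one must verify that the $\alpha$-independent, almost surely positive gap $\epsilon_0$ genuinely allows the per-configuration convergence to pass through the average over point configurations, which is precisely what the dominated convergence step is doing. One should also be careful to state the conclusion in the correct category, namely convergence of the \emph{law} of the directed graph (with every resulting edge bidirectional) rather than a pointwise or almost sure statement, since for any finite $\alpha$ there is positive probability that some radius exceeds $r_0+\epsilon_0$.
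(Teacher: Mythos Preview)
Your argument is correct and in fact more complete than the paper's. The paper derives the corollary from the preceding Remark: it observes that the Pareto density degenerates to a point mass at $r_0$ and then invokes the computation of Lemma~\ref{lem_back} to note that $P[u\leadsto v\mid v\leadsto u]\to S_{d-1}/(dV_d)=1$ as $\alpha\to\infty$, concluding informally from these two facts that the limit is the standard RGG. Your route is different: rather than computing a single conditional edge probability, you couple $\G$ and $\mathrm{RGG}(n,r_0)$ on a common point configuration, define the data-dependent gap $\epsilon_0$, and show that with probability tending to $1$ every radius falls in $[r_0,r_0+\epsilon_0)$, forcing the two graphs to coincide \emph{as labelled directed graphs}. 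This buys you an honest total-variation statement and makes precise the sense of ``converges'' that the paper leaves implicit; by contrast, the paper's argument, taken literally, only shows that each edge is asymptotically bidirectional and that radii concentrate, which is suggestive but does not by itself pin down convergence of the joint law. The only cosmetic point is that your $\epsilon_0$ should also guard against the measure-zero event that some pairwise distance equals $r_0$ (you note this), and one should read the min over an empty set as $+\infty$ when every pairwise distance is at most $r_0$; neither affects the argument.
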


\subsection{\label{path_lengths}Diameter}
The diameter of a graph is defined as the longest path among the set of shortest paths over all pairs of vertices. In our case, we are only concerned with directed paths. Similar to the clustering coefficient, the diameter of a graph is a measure of connectivity. It is common for real-world networks to have a small diameter, as can be seen from the popular `six-degrees of separation' phenomenon. Based on numerical results, we conjecture that the diameter of the DRGG model is $\mathcal{O}(\log n)$. It appears that the `fat tail' property of the radii distribution contributes to the significant reduction in the diameter of the DRGG model. However, even though we are not able to prove our conjecture for the diameter, it is still possible to prove a related result which hints that the diameter is indeed $\mathcal{O}(\log n)$.

\begin{lem}
\label{lem_evpath}
Let $\alpha$ and $d$ be fixed in $\G$. Pick two vertices $u$ and $v$ uniformly at random in $\G$. Let $a_k$ denote the number of directed paths of length $k$ from $u$ to $v$. If $k \ge \frac{\log n}{\log \log n}$, then $\mathbb{E}[a_k] \ge 1.$
\end{lem}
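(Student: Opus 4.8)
The proof will be a first-moment (linearity of expectation) estimate. Write a directed path of length $k$ from $u$ to $v$ as $u = w_0 \leadsto w_1 \leadsto \cdots \leadsto w_{k-1} \leadsto w_k = v$, with $w_1,\dots,w_{k-1}$ distinct vertices also distinct from $u$ and $v$. There are exactly $(n-2)(n-3)\cdots(n-k) = \tfrac{(n-2)!}{(n-k-1)!}$ ordered choices of the intermediate vertices, so by linearity
\[ \mathbb{E}[a_k] = \frac{(n-2)!}{(n-k-1)!}\; P\big[u \leadsto w_1 \leadsto \cdots \leadsto w_{k-1} \leadsto v\big] \]
for any one fixed candidate path, the probability being over the i.i.d.\ uniform torus positions $\vec{x}_{w_0},\dots,\vec{x}_{w_k}$ and the i.i.d.\ Pareto radii.

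The key step is to evaluate this per-path probability, and I claim it equals $z^k$ exactly, where $z = \Theta(\log n / n)$ is the single-edge probability computed in Theorem~\ref{thm1}. Conditioned on all the positions, the $k$ edge events are independent, since $w_{j-1}\leadsto w_j$ depends only on $\vec{x}_{w_{j-1}},\vec{x}_{w_j}$ and the radius $r_{w_j}$, and the target radii $r_{w_1},\dots,r_{w_k}$ are distinct; hence $P[\text{path}\mid \text{positions}] = \prod_{j=1}^{k} g\big(d(\vec{x}_{w_{j-1}},\vec{x}_{w_j})\big)$, where $g(t) = P[r \ge t]$ equals $1$ for $t \le r_0$ and $\tfrac{\eta}{\alpha-1}\big(t^{-(\alpha-1)} - 2^{\alpha-1}\big)$ otherwise. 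Now take the expectation over positions and integrate them out one at a time, starting from $\vec{x}_{w_k}$: by translation invariance of the torus, $\int g\big(d(\vec{x}_{w_{k-1}},\vec{y})\big)\,d\vec{y} = \int g\big(d(\vec{0},\vec{y})\big)\,d\vec{y} = z$, which no longer depends on $\vec{x}_{w_{k-1}}$; peeling off one factor of $z$ at each of the $k$ steps leaves $\int_{[0,1]^d} 1\, d\vec{x}_{w_0} = 1$, so $P[\text{path}] = z^k$. (If one only wants a lower bound the exact identity is unnecessary: restrict to the event that every consecutive pair of path vertices lies within distance $r_0$, on which every path edge is present with probability $1$ since all radii are at least $r_0$; this event has probability exactly $(V_d r_0^d)^k = (\log n/n)^k$.)

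Putting the pieces together, $\mathbb{E}[a_k] = \tfrac{(n-2)!}{(n-k-1)!}\,z^k \ge (n-k)^{k-1}\big(\tfrac{\log n}{n}\big)^k = \big(1 - \tfrac{k}{n}\big)^{k-1}\tfrac{(\log n)^k}{n}$. For $k$ near the threshold, $k = o(\sqrt{n})$, so $\big(1-\tfrac{k}{n}\big)^{k-1} = 1 - o(1)$; and $(\log n)^k \ge (\log n)^{\log n / \log\log n} = n$ precisely when $k \ge \tfrac{\log n}{\log\log n}$. Hence $\mathbb{E}[a_k] \ge 1$ once $k \ge \tfrac{\log n}{\log\log n}$ and $n$ is large; for $k$ well beyond the threshold the $(\log n)^k$ factor dominates the mild loss in the combinatorial prefactor, so only routine bounds are needed there.

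The computation is short, and the only point requiring care is the factorization $P[\text{path}] = z^k$: consecutive edges of the path share an endpoint, so the edge indicators are \emph{not} mutually independent, yet the joint probability still factors. What rescues this is the Markov-chain structure — each factor $g$ involves exactly one ``new'' uniform position, which can be integrated out via translation invariance of the torus to produce exactly $z$ — and it is also the reason the single-edge probability $z$ of Theorem~\ref{thm1} is the right quantity to feed in. A secondary (routine) bookkeeping point is that the prefactor is $(1-o(1))n^{k-1}$ in the relevant range and that incrementing $k$ multiplies the bound by roughly $\log n$, so the estimate is comfortably above $1$ rather than fragile near the threshold.
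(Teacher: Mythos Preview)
Your proof is correct and follows essentially the same approach as the paper: linearity of expectation gives $\mathbb{E}[a_k] = \binom{n-2}{k-1}(k-1)!\,z^k$, and then an elementary asymptotic estimate yields the threshold $k \ge \log n/\log\log n$. Your justification of $P[\text{path}] = z^k$ via the Markov-chain/translation-invariance telescoping is more careful than the paper's brief independence argument, but the content and conclusion are the same.
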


\begin{proof}
First consider three vertices $w_1, w_2$ and $w_3$ chosen uniformly at random. If there is a directed edge from $w_1$ to $w_2$, then this does not affect the probability of an edge from $w_2$ to $w_3$. This is because once the locations of $w_1$ and $w_2$ are fixed, the location of $w_3$ is still uniform, and having an edge from $w_2$ to $w_3$ only depends on the location of $w_3$. Thus, given a length $k$ directed path from $u$ to $v$, the edges of this path are independent of one another. Hence, if $a_k$ denotes the expected number of directed paths of length $k$ form $u$ to $v$, we have
\begin{equation}
    \label{eq_thm6}
    \mathbb{E}[a_k] = \dbinom{n-2}{k-1}z^k(k-1)!
\end{equation}
where $z$ is the probability of an edge in $\G$ which is given by $\frac{C\log n}n$ (Theorem \ref{thm1}) where $C = \frac{\alpha-1}{\alpha-1-d}$. Taking $n$ large, shifting $k \rightarrow k+1$, and using $\binom{n}k \ge \frac{n^k}{k^k}$, we have 
$$\mathbb{E}[a_k]  \ge \sqrt{2 \pi k}\left( \frac{C\log n}{e} \right)^k \, \frac{C \log n}n.$$
Hence, it suffices to find a $k$ such that $(\frac{\log n}e)^k \ge n.$ This is equivalent to finding a $k$ such that $k \log \log n - k \ge \log n.$ Rearranging, we see that $k \ge \frac{\log n}{\log \log n}$ works for large $n$, as desired. Therefore, the expected number of paths of length $k = \frac{\log n}{\log \log n}$ is at least $$\mathbb{E}[a_k] \ge C\sqrt{2 \pi k} \log k \ge 1.$$ 
\end{proof}
\begin{rem}
In fact we can take the log of the expression in Eq. \eqref{eq_thm6}, it can be shown that $k = \Omega \left( \frac{\log n}{\log \log n} \right)$ is the threshold for the expected number of paths being asymptotically greater than $1$.
\end{rem}
The above lemma tells us that we can expect to find a short path between any two vertices. However, this result is still far from establishing bounds on the diameter or even the length of the shortest path between two vertices chosen uniformly at random. We end this section with the conjecture.
\begin{conjecture}
Let $\alpha$ and $d$ be fixed in $\G$. The length of the diameter of $\G$ is $\mathcal{O}(\log n)$.
\end{conjecture}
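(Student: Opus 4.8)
The plan is to prove the $O(\log n)$ bound by a neighborhood-growth (ball-doubling) argument run simultaneously from both endpoints, combined with an amplification of Lemma~\ref{lem_evpath} from a statement about the \emph{expected} number of paths between a random pair into an \emph{existence} statement holding uniformly over all $\binom{n}{2}$ pairs. The skeleton is standard for small-world graphs: show that the out-neighborhood $B_t(u) = \{w : u \leadsto \cdots \leadsto w \text{ in } \le t \text{ steps}\}$ of any fixed vertex $u$ grows fast enough that $|B_t(u)|$ exceeds $\sqrt{n}$ within $O(\log n)$ steps, do the same for the in-neighborhood of the target $v$ in the reverse graph, and then argue that a forward set from $u$ and a backward set from $v$, each of size $\sqrt{n}$, either intersect or are joined by one extra edge with high probability.

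For the growth step I would exploit the outdegree law of Theorem~\ref{thm1}: every vertex has outdegree concentrated at $z(n-1) = \Theta(\log n)$. Writing $B_{t+1}(u)$ as the union of the out-neighborhoods of the vertices in $B_t(u)$, the naive bound gives $\mathbb{E}|B_{t+1}(u)| \approx |B_t(u)| \cdot \Theta(\log n)$, which would reach $\sqrt{n}$ in $O(\log n / \log\log n)$ steps. The real content is to control the overlap between the out-neighborhoods of distinct vertices of $B_t(u)$, since in a geometric model these neighborhoods are spatially correlated; I would bound this overlap by tracking the spatial footprint of $B_t(u)$ and showing that a constant fraction of each vertex's out-edges lead to previously unreached vertices. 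A Chernoff bound over the (conditionally independent, by the argument in the proof of Lemma~\ref{lem_evpath}) out-edges would convert the expectation into a high-probability lower bound on $|B_{t+1}(u)|$, and a union bound over $t = O(\log n)$ and over all starting vertices $u$ would make the growth uniform.

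For the meeting step I would use the two-sided path-count bound. Lemma~\ref{lem_evpath} gives $\mathbb{E}[a_k] = \binom{n-2}{k-1}(k-1)!\,z^k$; a second-moment computation of $\mathbb{E}[a_k^2]$ for $k = \Theta(\log n)$ that shows $\mathbb{E}[a_k^2] = (1+o(1))\mathbb{E}[a_k]^2$ yields existence of such a path with probability $1 - o(1)$ by the Paley--Zygmund inequality. Pushing this to the $1 - o(n^{-2})$ failure probability needed to union-bound over all $\binom{n}{2}$ pairs requires sharper concentration (for instance a Janson-type bound on the number of short paths) and is itself a nontrivial part of the argument.

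The step I expect to be the main obstacle is controlling the spatial spread in the growth phase, and this is where the geometry fights the heavy tail. With high probability every radius is at most $R_{\max} = r_0\,n^{1/(\alpha-1)}$ (the scale at which the expected number of vertices of larger radius drops below one), so a single hop moves at most $R_{\max}$ in space and a length-$k$ path spans distance at most $k\,R_{\max}$. Since $R_{\max} \to 0$ for $\alpha > d+1$, forcing $B_t(u)$ to grow in count requires it to grow in spatial extent, and the two are linked by $|B_t(u)| \lesssim n\,(tR_{\max})^d$. Making the growth argument deliver a logarithmic bound therefore reduces to showing that the rare large-radius circles supplied by the Pareto tail knit together into a backbone that crosses the torus in $O(\log n)$ hops despite this per-hop reach limitation. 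I expect this to be the crux, and it appears to demand a genuinely multiscale second-moment argument with an explicit dependence on how close $\alpha$ is to the boundary value $d+1$, rather than the single-scale union bound sketched above.
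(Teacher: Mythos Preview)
The statement you are attempting to prove is labelled a \emph{conjecture} in the paper, and the paper offers no proof of it; the authors say explicitly that they ``are not able to prove our conjecture for the diameter'' and give only the heuristic Lemma~\ref{lem_evpath} as supporting evidence. There is therefore no argument on the paper's side to compare your sketch against.

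More importantly, the obstacle you isolate in your final paragraph is not merely a difficulty for your method but an obstruction to the conjecture itself. You correctly note that with high probability every radius satisfies $r_v \le R_{\max}$ with $R_{\max}$ of order $r_0\,n^{1/(\alpha-1)}$ (a union bound using $P[r_v>R]\approx (r_0/R)^{\alpha-1}$ gives this). Since an edge $w_i\leadsto w_{i+1}$ forces $d(w_i,w_{i+1})\le r_{w_{i+1}}\le R_{\max}$, any directed path of $k$ hops spans torus distance at most $kR_{\max}$ by the triangle inequality. Among $n$ uniform points on the unit torus there exist, with high probability, two vertices at distance $\Theta(1)$, so
\[
\mathrm{diam}\bigl(\G\bigr)\;\ge\;\frac{\Theta(1)}{R_{\max}}\;=\;\Theta\!\left(r_0^{-1}n^{-1/(\alpha-1)}\right)\;=\;n^{\,\frac{1}{d}-\frac{1}{\alpha-1}-o(1)}
\]
with high probability. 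For every $\alpha>d+1$ the exponent $\tfrac{1}{d}-\tfrac{1}{\alpha-1}$ is strictly positive, so this lower bound is $n^{\Omega(1)}$, not $O(\log n)$. No refinement of the ball-growth, Paley--Zygmund, or Janson steps in your outline can overcome a deterministic geometric lower bound of this kind; the earlier parts of the sketch are therefore moot. If you want a positive result in this direction, the realistic target is an upper bound of order $\widetilde{\mathcal{O}}(n^{1/d})$, in line with the standard RGG diameter, rather than $O(\log n)$.
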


\section{\label{num_results}An Application to real world networks}

We tested our model on a variety of real world networks. Our code is available at \url{https://github.com/martinjm97/DRGG}. Interestingly, we empirically observed that networks created through word association resulted in networks that had binomial outdegree distribution and power law indegree distribution. An example of this is the University of South Florida Word Association Network. To create this network, researchers asked participants to write the first word that came to mind that was meaningfully related or strongly associated to words that were presented to them. Then a directed edge was drawn between the word said by the participant and the word that was presented to them. This network has approximately $10^4$ vertices and $7.2 \times 10^4$ edges. For more information about this network, see \cite{word_network}. In this section we investigate this network and see how its properties compare to that of the DRGG model.

\subsection{Degree Distribution}
We begin by exploring how the indegree and outdegree distributions for the word association networks compare to the predictions of DRGG.
The outdegree and indegree distributions of the network along with the best fits according to DRGG are shown in Figure \ref{fig:modelfits}. Note that for degree distributions, DRGG essentially has $\textbf{one free parameter},$ namely $\beta = \frac{\alpha-1}d.$ Therefore, we fit both the outdegree and indegree distributions using $\beta$. As shown in Figure \ref{fig:modelfits}, DRGG is a close fit, especially considering that there was only one free parameter to tune. We discovered that the value of $\beta = 7/3 \approx 2.33$ ($\alpha = 8, d = 3)$ resulted in the best fit.

\begin{figure}[!htbp]%
    \centering
    \subfloat[Outdegree Distribution]{{\includegraphics[width=8.5cm]{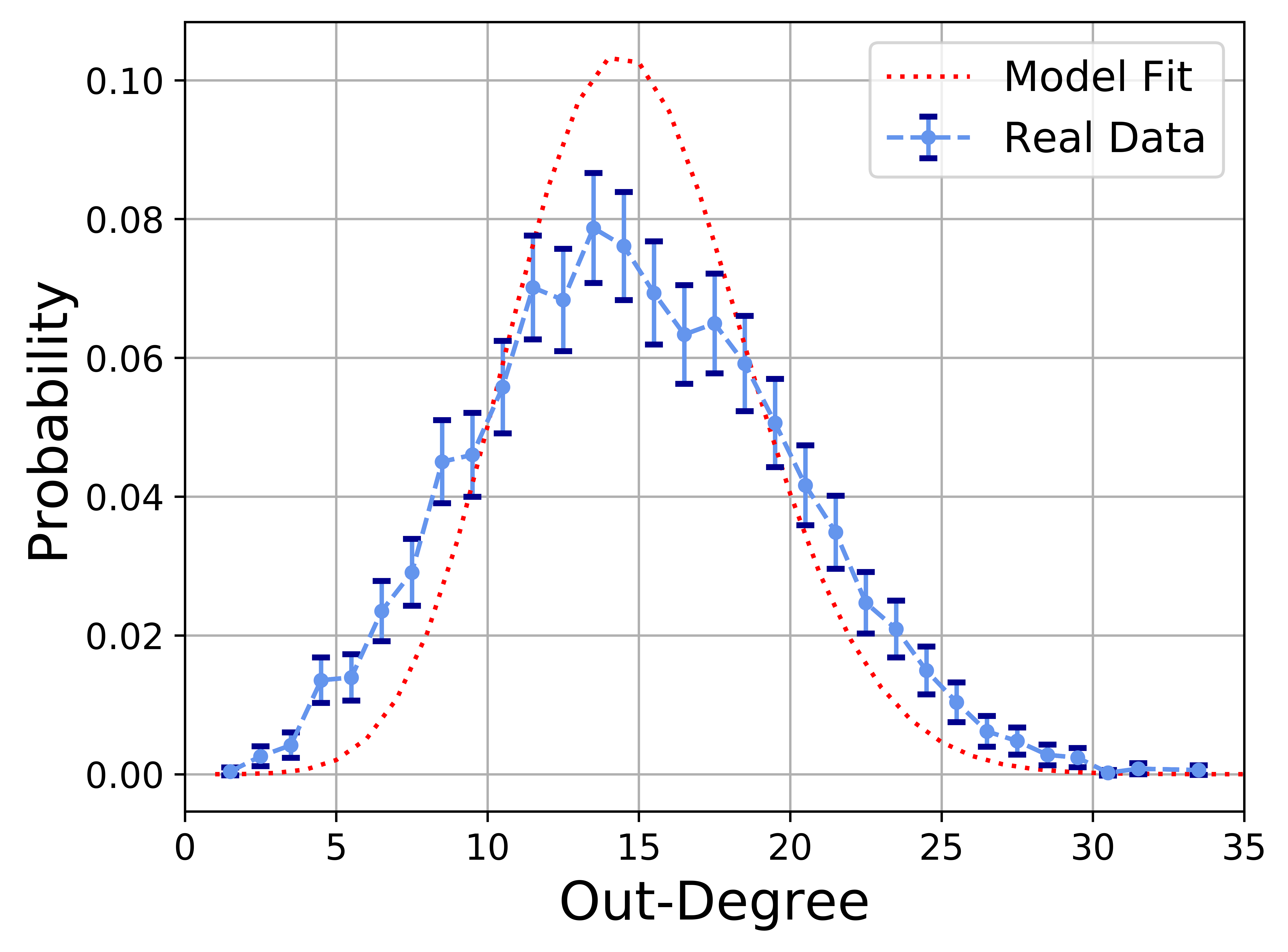} }}%
    \qquad
    \subfloat[Indegree Distribution]{{\includegraphics[width=8.5cm]{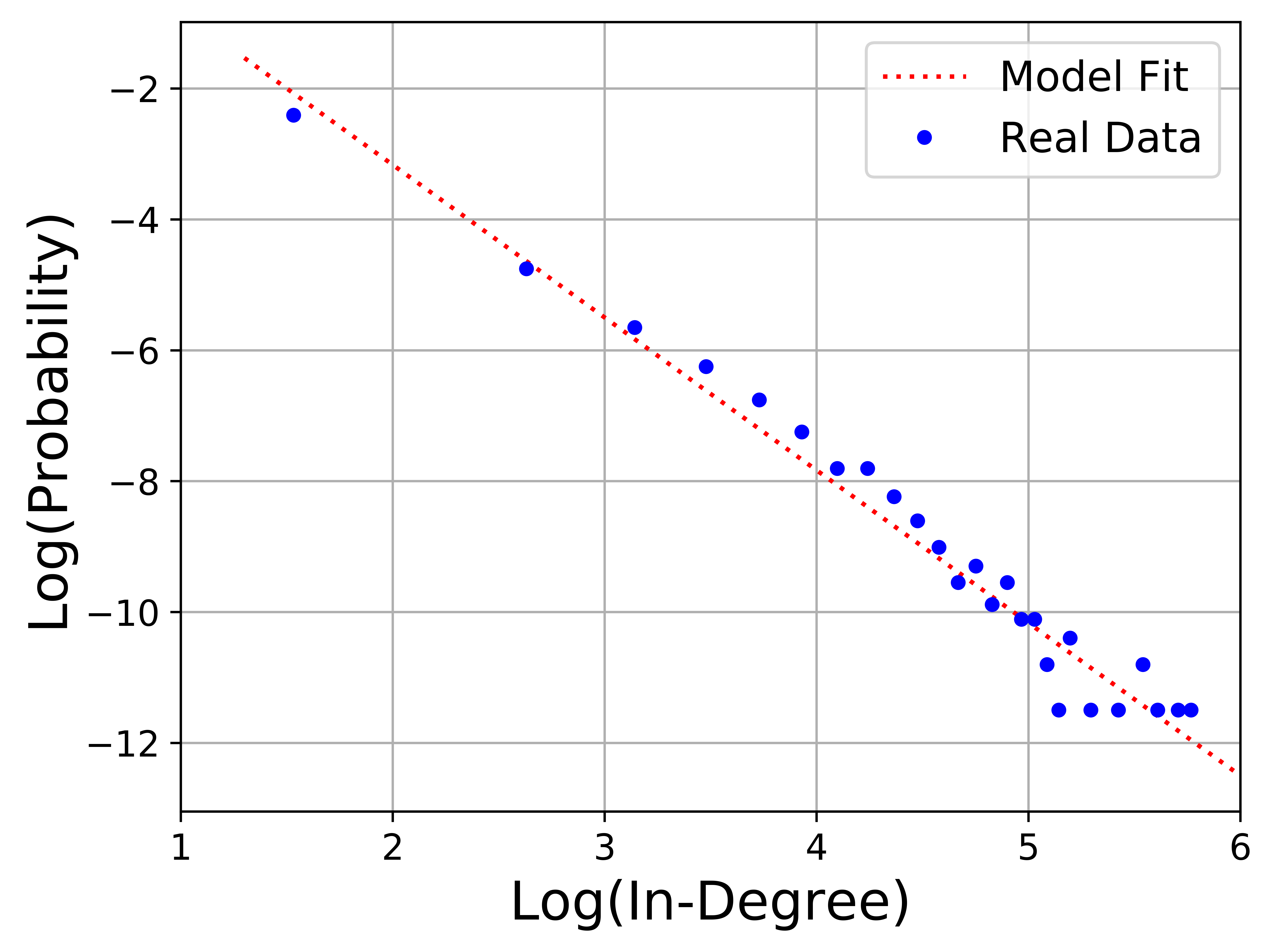} }}%
    \caption{Fit for outdegree and indegree distributions for the word association network. $\beta$ for both fits was $\beta = 7/3 \approx 2.33.$ The outdegree distribution on the left also has error bars of approximately two standards of deviation (we estimate one standard deviation of a bin by the square root of the number of items that fall in the bin).}%
    \label{fig:modelfits}%
\end{figure}

\subsection{Other Graph Statistics}
It is interesting to investigate how other statistics quantities compare to the values predicted by DRGG. In this section we use the model parameters of $\alpha = 8, d = 3$ to fit the degree distributions. 

We compared the average clustering coefficient (the clustering coefficient averaged over all nodes), the diameter, and the average path length. The average clustering coefficient gives information about how tightly-knit small communities are in the graph \cite{small_world}. Likewise, diameter and average path length provides insight into how closely the simulation models the real-world data. The comparison between our model predictions and actual real data values are shown in Table \ref{table:stat_table}. Note that the average clustering coefficient is noticeably lower in our model predictions than in the actual data. This suggests that the word association data set has a stronger clique behavior than our model. Overall, the empirical degree distributions and the empirical graph statistics do not quite match the predictions given by DRGG. This may be explained by the fact our results are asymptotic in $n$ and the network we studied only has $10^4$ vertices.

\begin{table}[!htbp]
\begin{center}
\begin{tabular}{|c|c|c|c|}
\hline
           & Avg. Clustering Coefficient & Diameter        & Avg. Path Length \\ \hline
Simulation & 0.512 $\pm$ $10^{-2}$       & 9.67 $\pm$ 0.94 & 5.149 $\pm$ 0.65 \\ \hline
Real Data  & 0.119                       & 7               & 4                \\ \hline
\end{tabular}
\end{center}

\caption{The results we averaged over 100 simulations of DRGG and were computed on the undirected version of the graph i.e., the directed edges were made by bidirectional (undirected). This was done for computational simplicity. Since the DRGG model is strongly connected, this should not have a large impact on the results. The same procedure was applied to the real data set. Finally, we only used the giant component in the real data set to calculate these statistics. The mean values for the three graph statistics along with two standard deviations of error are displayed.}
\label{table:stat_table}\end{table}
\subsection{Analysis of Hubs}
We analyze the words in the `hubs' of the word association network \cite{word_network}. By hubs, we specifically mean words that have a high indegree and represent the tail-end of the power law indegree distribution. The 20 words with the largest indegrees are shown in Table \ref{table:hubwords}. A lot of these hub words are $\textbf{emotional}$ words such as Love, Good, Bad, Pain, and Happy. In addition, the hubs also include words that are ubiquitous in everyday life, such as money, water, car, work, and people.

Further analysis was performed on the largest 50 hubs to understand their significance in relation to the remainder of the graph. One way of understanding the relationships between nodes in the graph is to look at their semantic similarity, or closeness in meaning. Several metrics have been proposed to quantify semantic similarity in words. We used two, based on the WordNet database and Word2Vec model. The WordNet database contains a hand-catalogued tree-like hierarchy of words. Given a word at node $n$, hypernyms (words with broader meanings) are located higher in the tree relative to $n$, while hyponyms (words with more specific meanings) are located lower in the tree. Two words can be judged to be similar in meaning if they are close together in the graph induced by these word relationships \cite{fellbaum2010wordnet}. One particular such measure of similarity is Wu-Palmer similarity:
\[ \text{Similarity score}(w_1, w_2) = \frac{2 \cdot \text{depth}(\text{least common ancestor}(w_1, w_2)}{\text{depth}(w_1) + \text{depth}(w_2)}. \]
The Wu-Palmer similarity always lies between $0$ and $1$, and is higher if the words are closer in semantic similarity \cite{DBLP:journals/corr/WuP94}. We found that the average similarity between a hub and its neighbors is $0.359$, the average similarity between an arbitrary node in the graph and its neighbors is $0.354$, and that the average similarity between two arbitrarily selected nodes in the graph is $0.241$. The average similarity between hubs is $0.263$. Thus, according to this similarity metric, hubs are on average slightly closer to their neighbors than other nodes. As expected, randomly chosen nodes have lower similarity as they may not be related, while nodes that are connected would be expected to have higher similarity. Hubs are also not very closely related: they are thus all common yet distinct words that serve as distinct ``sinks'' in the word graph.

To confirm this trend, we compared these results to those obtained from a different similarity metric. The Word2Vec model maps different words to continuous vector representations of a desired dimension. For more information, see the appendix. Similarity between vectors is then supposed to capture semantic similarity: indeed, relationships such as $\text{King} - \text{Man} + \text{Woman} = \text{Queen}$ seem to approximately hold between the vectors representing these respective words. We found that the average dot product similarity between a hub and its neighbors is $0.452$, the average similarity between an arbitrary node and its neighbors is $0.422$, and the average similarity between randomly selected nodes is $0.338$. The average similarity between different hubs is $0.415$. Thus, as before, hubs are on average slightly more similar to neighbors than arbitrarily chosen nodes, and much more similar than random words are to each other. Interestingly, this metric describes hubs as fairly similar to each other- this might be an artifact of this metric, as all hubs describe fairly common words that might appear together often in many text corpora. 

We also used a library named TextBlob to perform sentiment analysis to determine whether hubs expressed significantly different emotions than arbitrary nodes in the graph. For more information about TextBlob, please see the appendix.

Phrases can be constructed by concatenating words in the association network with their neighbors.  TextBlob was used to analyze these phrases.  Phrases that contain hub words have an average polarity of $0.0970$ and subjectivity of $0.553$, with variances of $0.0731$ and $0.0752$ respectively.
Phrases made from arbitrary graph nodes have an average polarity of $0.00511$ and subjectivity of $0.532$, with variances of $0.0982$ and $0.0611$ respectively. Thus, hubs tend to be slightly more ``positive'' than ``negative'', and slightly more ``subjective'' than ``objective'' compared to the average node in the graph, but these differences are on the order of the variances in these numbers. It thus seems that hubs are generally fairly neutral.

In summary, it seems that hubs are distinguished by the fact that they are short words that are easily memorable. This explains why they have high indegree: people have an easy time remembering them, regardless of their semantic content.

\begin{table}[!htbp]
\centering
\begin{tabular}{|c|c|c|c|c|}
\hline
Food: 324 & Money: 302 & Water: 276   & Car: 259    & Good: 255   \\ \hline
Bad: 229  & Work: 195  & House: 185   & School: 183 & Love: 181   \\ \hline
Man: 171  & Paper: 163 & Pain: 158    & Animal: 156 & People: 154 \\ \hline
Fun: 151  & Book: 149  & Clothes: 147 & Happy: 145  & Hard: 144   \\ \hline
\end{tabular}
\caption{Words with the largest indegrees in the word association network. The respective indegrees are shown next to the words.}
\label{table:hubwords}
\end{table}
\section{Conclusions and Future Work}
We have introduced a new model of random graphs, the Directed Random Geometric Graph (DRGG) model that has the property of being scale free in its indegree distribution, has few edges, and has a high clustering coefficient. Furthermore, we have displayed that this model can be applied to real world networks such as word association networks. Future work includes further theoretical investigation of the DRGG model, such as proving Conjecture 1 which states that the diameter of $G(n,\alpha, d)$ is $\mathcal{O}(\log n).$ 

A potential future application of DRGGs is to model power grid networks. Power grids are comprised of different types of nodes such as generators and transformers. Edges are directed because some nodes produce energy, others transfer energy, and yet others use up energy. Therefore, relationship among the nodes is asymmetrical. This matches the asymmetric indegrees and outdegrees of our model. Furthermore, similar to our model, connections are highly correlated with distance, since nodes that are far apart will be impractical to connect. However, a challenge for this analysis is the lack of directed data that is available, although there has been some recent progress to construct such a network by pulling data from multiple sources \cite{synthetic_grid}. We envision the analysis of directed power grid networks as a direction for future research with potential for broad applicability. 
In addition, we hope to see further applications of the DRGG model to other real world networks.

\section{Acknowledgements}
Jesse Michel, Sushruth Reddy, Rikhav Shah, and Sandeep Silwal would like to thank IBM Research for the opportunity to do an internship at the MIT-IBM AI lab in Cambridge MA in the winter of 2018, which was awarded to them for winning the 2017 HackMIT competition. They would also like to thank Ramis Movassagh for mentoring them during this period.

\section{Appendix}
\subsection{Proof of Theorem 2}
\begin{proof}
Throughout the proof, $P[\cdot]$ denotes a probability while $p(\cdot)$ denotes a probability density.
By linearity of expectation, we can write $\mathbb{E}[|S|] = n(n-1)(n-2) \, P(v \leadsto u, w \leadsto u, v \leadsto w)$ for randomly selected vertices $u, v, w$. Without loss of generality, as we are working on a torus, we may take $u$ to be located at $(0, 0, \cdots, 0)$. We can then condition on the locations of $v$ and $w$, which we denote $\vec{x_v}$ and $\vec{x_w}$, as well as the radii of circles centered at $u$ and $w$, which we denote $r_u$ and $r_w$, respectively, to obtain:
\[ P[v \leadsto u, w \leadsto u, v \leadsto w] = \int_{r_0}^{\frac{1}{2}} dr_u \int_{r_0}^{\frac{1}{2}} dr_w
\int_{[0,1]^d} d^d\vec{x_v}
\int_{[0,1]^d} d^d\vec{x_w}
\text{ }P[v \leadsto u, w \leadsto u, v \leadsto w | r_u, r_w, \vec{x_v}, \vec{x_w}] \text{ } p(r_u, r_w, \vec{x_v}, \vec{x_w}). \]
We now note that the radii of circles corresponding to $u$ and $w$, along with the locations of $v$ and $w$ are independent. Additionally, edges are conditionally independent given these radii and locations. Then, we may rewrite the above probability as:
\[\int_{r_0}^{\frac{1}{2}} dr_u \int_{r_0}^{\frac{1}{2}} dr_w
\int_{[0,1]^d} d^d\vec{x_v}
\int_{[0,1]^d} d^d\vec{x_w}
P[v \leadsto u|\vec{x_v}, r_u]\text{ }P[w \leadsto u | \vec{x_w}, r_u]\text{ }P[v \leadsto w | \vec{x_w}, \vec{x_v}, r_w]\text{ }p(r_u)\text{ }p(r_w)\text{ }p(\vec{x_v})\text{ }p(\vec{x_w}). \]
Note firstly that the probability density of the points $x_v$ and $x_w$ is $1$ since $v, w$ are chosen uniformly from the unit torus which means $p(\vec{x_v}) = p(\vec{x_w}) = 1$. Furthermore, $P[v \leadsto u|\vec{x_v}, r_u] = \mathbbm{1}[d(\vec{x_v}, 0) < r_u]$, as we draw an edge from $v$ to $u$ if and only if $\vec{x_v}$ is inside the circle of radius $r_u$ centered at $u$ (the origin). Similarly, $P[w \leadsto u|\vec{x_w}, r_u] = \mathbbm{1}[d(\vec{x_w}, 0) < r_u]$. Finally, we have that $P[v \leadsto w|\vec{x_w}, \vec{x_v}, r_w] = \mathbbm{1}[d(\vec{x_v}, \vec{x_w}) < r_w]$, as we draw an edge from $v$ to $w$ if and only if the distance between their coordinates is less than the radius of $w$'s circle. Substituting, we obtain:
\[ P[v \leadsto u, w \leadsto u, v \leadsto w]  = \int_{r_0}^{\frac{1}{2}} dr_u p(r_u) \int_{r_0}^{\frac{1}{2}} dr_w p(r_w) \int_{d(\vec{x_v}, 0) < r_u} d^d\vec{x_v} 
\int_{d(\vec{x_w}, 0) < r_u} d^d\vec{x_w} 
\mathbbm{1}[d(\vec{x_v}, \vec{x_w}) < r_w]. \]
We now upper and lower bound this expression in order to show that it is of order $\Theta(n\log^2(n))$.
We first prove an upper bound. 
\subsubsection{Upper Bound Calculation}
Note that $\mathbbm{1}[d(\vec{x_v}, \vec{x_w}) < r_w] \leq 1$ is always true, and so our probability satisfies:
\[ P[v \leadsto u, w \leadsto u, v \leadsto w] \leq \int_{r_0}^{\frac{1}{2}} dr_u p(r_u) \int_{r_0}^{\frac{1}{2}} dr_w p(r_w) \int_{d(\vec{x_v}, 0) < r_u} d^d\vec{x_v} 
\int_{d(\vec{x_w}, 0) < r_u} d^d\vec{x_w} \, 1. \]
Now, the last two integrals individually evaluate to the volume of the $d$-dimensional ball with radius $r_u$. Furthermore, the integral over $r_w$ evaluates to $1$, as $p(r_w)$ is a normalized probability density function. The remaining integral can be evaluated as follows:
\[ P[v \leadsto u, w \leadsto u, v \leadsto w] \leq V_d^2 \eta \int_{r_0}^{\frac{1}{2}} r_u^{2d - \alpha} = V_d^2 \eta \left[\frac{r_u^{2d - \alpha + 1}}{2d - \alpha + 1}\right]^{\frac{1}{2}}_{r_0}.\]
As $n$ becomes large, for any fixed $d$ we have that $r_0$ goes to $0$. This then implies (for $\alpha > 2d$) that the dominant term in the above (and a valid upper bound) is:
\[ V_d^2 \eta \frac{r_0^{2d - \alpha + 1}}{\alpha - 2d - 1} = C_1 \frac{\log^2(n)}{n^2}\]
where $C_1$ is some constant independent of $n$. 
\subsubsection{Lower Bound Calculation}
We now show a lower bound on the integral expression. We claim that:
\[P[v \leadsto u, w \leadsto u, v \leadsto w] \geq \int_{r_0}^{\frac{1}{2}} dr_u p(r_u) \int_{r_0}^{\frac{1}{2}} dr_w p(r_w) \int_{d(\vec{x_v}, 0) < r_u} d^d\vec{x_v} 
\int_{d(\vec{x_w}, 0) < r_u} d^d\vec{x_w}
\mathbbm{1}[r_w > 2 r_u] \]
This is because for choices of $r_w, r_u$ such that $r_w > 2 r_u$, we have $d(\vec{x_v}, \vec{x_w}) \leq d(\vec{x_v}, 0) + d(\vec{x_w}, 0) \leq 2r_u \leq r_w$ by the triangle inequality. Then, this integrand is always nonnegative and is identical to the original integrand when it is nonzero. This integral is then a lower bound on the original, as desired. Now, the integrals over $\vec{x_v}, \vec{x_w}$ can be done as before to give:
\[P[v \leadsto u, w \leadsto u, v \leadsto w] \geq V_d^2 \int_{r_0}^{\frac{1}{2}} dr_u r_u^{2d} p(r_u) \int_{r_0}^{\frac{1}{2}} dr_w p(r_w)
\mathbbm{1}[r_w > 2 r_u] \]
We can eliminate the indicator function by rewriting the bounds as:
\[  P[v \leadsto u, w \leadsto u, v \leadsto w] \ge V_d^2 \int_{r_0}^{\frac{1}{4}} dr_u r_u^{2d} p(r_u) \int_{2r_u}^{\frac{1}{2}} dr_w p(r_w) \]
Doing the integral over $r_w$, this becomes:
\begin{align*}
    &P[v \leadsto u, w \leadsto u, v \leadsto w] \ge V_d^2 \eta^2 \int_{r_0}^{\frac{1}{4}} dr_u r_u^{2d-\alpha} \left(\frac{(2r_u)^{1-\alpha} - (1/2)^{1-\alpha}}{\alpha - 1}\right) \\
    &= \frac{V_d^2 \eta^2}{\alpha - 1} \left[ \frac{2^{-\alpha}\text{ } r_0^{2d - 2\alpha - 2}}{\alpha -d - 1}\left( 1 - (4r_0)^{2(\alpha + 1)-2d} \right) - \frac{2^{\alpha-1}\text{ }r_0^{2d - \alpha + 1}}{\alpha -2d - 1} \left(1 - (4r_0)^{ \alpha - 1-2d}\right) \right].
\end{align*}
Note that for fixed $d$, as $n$ goes to infinity, $r_0$ goes to $0$, so that the leading order term here is the one containing the factor $r_0^{2d-2\alpha -2}$ (as $\alpha > 2d > d - 1$). Thus, we have the lower bound
\[ P[v \leadsto u, w \leadsto u, v \leadsto w] \ge (C_2 - \epsilon) \frac{\log^2(n)}{n^2} \]
for any $\epsilon > 0$, for large enough $n$. This then proves that, for fixed $\alpha, d$, that:
\[ C_1 \frac{\log^2(n)}{n^2} \geq P[v \leadsto u, w \leadsto u, v \leadsto w] \geq (C_2 - \epsilon) \frac{\log^2(n)}{n^2} \]
Multiplying by $n(n-1)(n-2)$, we get:
\[ C_1 n\log^2(n) \geq \mathbb{E}[|S|] \geq (C_2 - \epsilon) n\log^2(n) \]
for any $\epsilon$, given that $n$ is sufficiently large. This then proves that $\mathbb{E}[|S|] \in \Theta(n\log^2(n))$, as desired.
\end{proof}

\subsection{Proof of Theorem 3}

\subsubsection{Computing the expectation}
\begin{proof}
Now, note that $$P[v \leadsto w | v \leadsto u, w \leadsto u, d_u = k, r_u, r_w, \vec{x_v}, \vec{x_w}] = P[v \leadsto w | \vec{x_v}, \vec{x_w}, r_w] = \mathbbm{1}[||\vec{x_v} - \vec{x_w}|| < r_w]$$ (i.e. $v$'s lying in $w$'s circle is independent of all variables but for the positions of $v, w$ and $w$'s radius). Furthermore, $$p(r_u, r_w, \vec{x_v}, \vec{x_w}| v \leadsto u, w \leadsto u, d_u = k) = p(r_w)p(r_u, \vec{x_v}, \vec{x_w}| v \leadsto u, w \leadsto u, d_u = k),$$ from the conditional independence of $r_w$, as none of the other variables being considered involve edges pointing to $w$. Furthermore, this equals $p(r_w)p(r_u | v \leadsto u, w \leadsto u, d_u = k)p(\vec{x_v}, \vec{x_w}| r_u, v \leadsto u, w \leadsto u)$, by the chain rule of probability. Note that $p(r_u | v \leadsto u, w \leadsto u, d_u = k) = p(r_u | d_u = k)$ as $r_u$ is independent of the fact that two things lie within $u$'s circle given the indegree of $u$. Furthermore, note that 
$$p(\vec{x_v}, \vec{x_w}| r_u, v \leadsto u, w \leadsto u) = \frac{P[v \leadsto u, w \leadsto u, \vec{x_v}, \vec{x_w} | r_u]p(\vec{x_v}, \vec{x_w} | r_u)}{P[v \leadsto u, w \leadsto u | r_u]}$$ 
by Bayes' Rule. Note that $p(\vec{x_v}, \vec{x_w} | r_u) = p(\vec{x_v})p(\vec{x_w}) = 1$, that $P[v \leadsto u, w \leadsto u | r_u] = (V_dr_u^d)^2$, and that 
$$P[v \leadsto u, w \leadsto u | r_u, \vec{x_v}, \vec{x_w}] = \mathbbm{1}[||\vec{x_w}|| < r_u] \mathbbm{1}[||\vec{x_v}|| < r_u].$$

Now, we decompose the conditional probability above as:
\begin{align*}
    &P[v \leadsto w | v \leadsto u, w \leadsto u, d_u = k] \\
    &=\int dr_u \int dr_v \int d^d\vec{x_v} \int d^d\vec{x_w} P[v \leadsto w | v \leadsto u, w \leadsto u, d_u = k, r_u, r_w, \vec{x_v}, \vec{x_w}] p(r_u, r_w, \vec{x_v}, \vec{x_w}| v \leadsto u, w \leadsto u, d_u = k).
\end{align*}

Then, our integral becomes:
$$P[v \leadsto w | v \leadsto u, w \leadsto u, d_u = k] = \int dr_u p(r_u | d_u = k)\int dr_w p(r_w) \int_{||\vec{x_v}|| < r_u} d^d\vec{x_v}  \int_{||\vec{x_w}|| < r_u} d^d\vec{x_w} \frac{\mathbbm{1}[||\vec{x_v} - \vec{x_w}|| < r_w] }{V_d^2 r_u^{2d}}. $$

Substituting these results back into the original expression, we get the desired result in the main body of the paper.
\end{proof}
\subsubsection{Computing the integral expression for $\overline{C}_\text{in}$ for odd $d$}
\begin{proof}
Note that the inner $2$ integrals give the probability that $2$ points randomly chosen in a sphere of radius $r_u$ are less than $r_w$ apart. The answer to this question can be derived from a result that can be found in \cite{ball_distance_distribution}. Namely, for $\textbf{odd}$ $d$, the probability distribution function for the distance between two points being exactly $r$ apart in a ball of radius $R$ is:
$$P(r) = \frac{d r^{d-1}}{R^d}\frac{d!!}{(d-1)!!} \sum_{k=0}^{\frac{d-1}{2}} \frac{(-1)^k}{2k+1} \dbinom{\frac{d-1}{2}}{k} \left(1 - \left(\frac{r}{2R}\right)^{2k+1} \right). $$
(The even case is much harder to work with and does not result in a nice closed form. Thus, we will only work with the odd $d$ case).
The cumulative distribution function is then (for odd $d$):
$$D(r) = \frac{d d!!}{(d-1)!!} \sum_{k=0}^{\frac{d-1}{2}} \frac{(-1)^k}{2k+1} \dbinom{\frac{d-1}{2}}{k} \left(\frac{1}{d} \left( \frac{r}{R}\right)^d - \frac{1}{2^{2k+1}(2k+d+1)} \left(\frac{r}{R} \right)^{2k+d+1} \right).$$
Then, when we substitute back into the integral, we wish to compute integrals of the form:
$$\int dr_u p(r_u) \int dr_w p(r_w) \left(\frac{r_w}{r_u} \right)^m =  \frac{\eta^2 r_0^{2-2\alpha}}{(1 - \alpha)^2 -m^2} \, \left[ \left(1 -\left(\frac{1}{2 r_0}\right)^{m-\alpha + 1}\right)\left(1 -\left(\frac{1}{2 r_0}\right)^{-m-\alpha + 1}\right)\right]$$
where $m$ is an arbitrary integer. Note that  since $m \le (d-1)/2$, and $\alpha>2d+1$, we have that $\pm m-\alpha+1<0$. Then, as $n \to \infty$ (and, thus, as $r_0 \to 0$), the terms in parentheses both go to 1, so that the dominant term is proportional to $r_0^{2-2\alpha}$. We then get:
$$\int dr_u p(r_u) \int dr_w p(r_w) \left(\frac{r_w}{r_u} \right)^m \approx \frac{\eta^2}{(1-\alpha)^2 - m^2} r_0^{2-2\alpha} \to \frac{(\alpha-1)^2}{(\alpha - 1)^2 - m^2}. $$

Putting this all back together again, we get the final expression in the main body of the paper.

\begin{table}[!htbp]
\centering
\setlength{\tabcolsep}{15pt}
\def\arraystretch{3}
\begin{center}
\begin{tabular}{|c|c|c|}
\hline
$d$ & $\mathbb{E}[\overline{C}_{in}]$ &  $\lim_{\alpha \rightarrow \infty} \mathbb{E}[\overline{C}_{in}]$ \\ \hline
$1$ & $\frac{ (\alpha-1)^2}4 \left( \frac{4}{\alpha^2-2 \alpha} + \frac{1}{- \alpha^2-2\alpha + 1} \right)$ & $\frac{3}4$
      \\ \hline
$3$ & $\frac{3(\alpha-1)^2(5\alpha^4 -20\alpha^3 + 9\alpha^2 + 22\alpha -72)}{32(\alpha^2-2\alpha-8)(\alpha^2-2\alpha-5)(\alpha^2-2\alpha-3)}$ & $\frac{15}{32}$  \\ \hline
$5$ & $\frac{(\alpha-1)^2(159 \alpha^6 - 954\alpha^5 + 5364\alpha^4 -15096\alpha^3 - 73679\alpha^2 + 175006\alpha + 392040)}{512(\alpha^2-2\alpha-24)(\alpha^2-2\alpha-9)(\alpha^2-2\alpha-7)(\alpha^2-2\alpha-5)}$
  &  $\frac{159}{512}$ \\ \hline

\end{tabular}
\end{center}
\caption{Values of $\mathbb{E}[\overline{C}_{in}]$ for various odd dimensions. The limiting value as $\alpha \rightarrow \infty$ is also shown. We only have values for odd $d$ since we could not calculate a closed form expression for the probability of the distance between two randomly chosen points being exactly $r$ apart in a ball of radius $R$.}
\label{table:limit_clustering}
\end{table}
\end{proof}
\subsection{Analysis of Hubs}
The mapping utilized by Word2Vec is obtained by training a neural net. Specifically, the vectors are the solutions of an optimization problem which roughly attempts to maximize the dot products of vectors corresponding to words that are located close to each other in some text corpus. Ideally, words that are located close to each other often have similar meaning, and thus higher dot products, giving some indication of semantic similarity \cite{DBLP:journals/corr/MikolovSCCD13}. For the purposes of these experiments, we used a set of word vectors that were pre-trained on Google News articles. TextBlob uses a probabilistic model, tending to classify words as positive if they occur in many positive movie reviews, and negative if they occur in low-rated reviews. The library returns polarity and subjectivity values, which measure how negative/positive (on a $[-1, 1]$ scale) and objective/subjective (on a $[0, 1]$ scale) a given phrase is, respectively. 

\bibliography{Paper}
\end{document}